\title{Graph Threading with Turn Costs}
\author{Erik D. {Demaine}}{Computer Science and Artificial Intelligence Lab, Massachusetts Institute of Technology, USA \and \url{https://erikdemaine.org/}}{edemaine@mit.edu}{https://orcid.org/0000-0003-3803-5703}{}
\author{Yael {Kirkpatrick}}{Department of Mathematics, Massachusetts Institute of Technology, USA \and \url{https://yaelkirk.github.io/}}{yaelkirk@mit.edu}{https://orcid.org/0009-0007-6718-7390}{NSF Graduate Research Fellowship under Grant No. 2141064.}
\author{Rebecca {Lin}}{Computer Science and Artificial Intelligence Lab, Massachusetts Institute of Technology, USA \and \url{https://rebeccayelin.github.io/}}{ryelin@mit.edu}{https://orcid.org/0000-0003-4747-4978}{MIT Stata Family Presidential Fellowship.}
\authorrunning{Erik D. Demaine, Yael Kirkpatrick, and Rebecca Lin} 
\keywords{Deployable structures, reconfiguration, turn costs, Eulerian tour, Hamiltonian cycle, traveling salesperson, computational complexity, approximation algorithms}
\newcommand{\threading}{\textsc{Threading}\xspace}
\newcommand{\decision}{\textsc{Decision Threading}\xspace}
\newcommand{\double}{\textsc{Double Threading}\xspace}
\newcommand{\exactlydouble}{\textsc{Exactly-Double Threading}\xspace}
\newcommand{\perfect}{\textsc{Perfect Threading}\xspace}
\newcommand{\hamcycle}{\textsc{Hamiltonian Cycle}\xspace}
\newcommand{\oneinthreesat}{\textsc{1-in-3 SAT}\xspace}
\newcommand{\tsp}{\textsc{Traveling Salesperson}\xspace}
\newcommand{\true}{\textsc{true}\xspace}
\newcommand{\false}{\textsc{false}\xspace}
\newcommand{\NP}{\textsc{NP}\xspace}
\renewcommand{\P}{\textsc{P}\xspace}
\newcommand{\defn}[1]{\textbf{\textit{\boldmath #1}}}
\newcommand{\set}[1]{\left\{ #1 \right\}}
\newcommand{\clause}[1]{\mathbbm{1}(#1)}
\DeclareUrlCommand\path{\urlstyle{same}}
\begin{document}

\maketitle

\begin{abstract}
How should we thread a single string through a set of tubes so that pulling the string taut self-assembles the tubes into a desired graph? While prior work [ITCS 2024] solves this problem with the goal of minimizing the length of string, we study here the objective of minimizing the total turn cost. The frictional force required to pull the string through the tubes grows exponentially with the total absolute turn angles (by the Capstan equation), so this metric often dominates the friction in real-world applications such as deployable structures. We show that minimum-turn threading is NP-hard, even for graphs of maximum degree $4$, and even when restricted to some special cases of threading. On the other hand, we show that these special cases can in fact be solved efficiently for graphs of maximum degree $3$, thereby fully characterizing their dependence on maximum degree. We further provide polynomial-time exact and approximation algorithms for variants of turn-cost threading: restricting to threading each edge exactly twice, and on rectangular grid graphs.
\end{abstract}
\section{Introduction}

Deployable structures are dynamic systems capable of transitioning
between multiple geometric configurations,
with applications to compact storage, ease of transportation, and versatility.
One intriguing approach to building deployable structures is to thread a
collection of tubes with string so that different configurations
result from adjustments to string tension.
For example, this idea is at the heart of the ``push puppet'' family of toys.
Interest in this approach is growing within various applied fields,
including fabrication~\cite{Lin2024push} and robotics~\cite{bern2022contact},
with the goal of combining the advantages in adaptability and safety of
soft materials with the precision and integrity of rigid
structures~\cite{bern2023fabrication,patterson2023safe}. 

Threading was formalized into theoretical computer science
in recent work by Demaine, Kirkpatrick, and Lin~\cite{demaine2024graph}.
Specifically, a \defn{threading} of a graph is a closed walk that,
at every vertex, induces a connected graph on the incident edges.
(See Section~\ref{sec:preliminaries} for precise definitions.)
Thus routing a single string along this walk, and pulling the string taut,
self-assembles the tubes (edges) to mimic the connectivity of the given graph.
Past work \cite{demaine2024graph}
gave a polynomial-time algorithm for finding the
\emph{minimum-length} threading of a given graph.

In this paper, we analyze threading according to a \emph{turn} metric,
motivated by challenges that arise when constructing threadings in
practice~\cite{Lin2024push}. 
While minimizing length makes sense to save on material cost and
manufacturing time, the feasibility of deployable structures is
limited by the force required to pull the string, which is usually
dominated by the amount of friction in the system.
According to the Capstan equation, friction increases \emph{exponentially}
with the sum of the absolute values of turn angles in the threading
route~\cite{Kyosev2015braiding}.
This motivates us to find threadings of minimum total ``turn cost'',
where we allow each vertex to specify the (symmetric) cost of turning
between every pair of incident edges.
(Again see Section~\ref{sec:preliminaries} for precise definitions.)

\subsection{Our Results}

In this paper, we analyze the \threading~problem under a turn metric, where the goal is to find the threading of a graph with minimum total turn cost. We present several new results: 

\begin{itemize}
\item In Section~\ref{sec:hardness}, we prove \NP-completeness of
  \threading~under a turn metric in various settings.
  These results provide a stark contrast to the length metric where \threading~can be solved in polynomial time~\cite{demaine2024graph}.

  \begin{itemize}
  \item
    We give a simple proof that \threading~is \NP-complete, even if the objective is simply to minimize the number of turns, that is, the turn costs are either $0$ or $1$.
    This reduction shows how just one linear-degree vertex can
    effectively simulate an arbitrary Hamiltonian Cycle instance,
    motivating the restriction to bounded degree.
  \item
    We prove that \threading~is \NP-complete even for graphs of
    maximum degree $4$, via a reduction from \oneinthreesat.
  \item
    \perfect~is a special case where we require using
    the smallest possible turn cost when restricted to every vertex.
    Such a threading is not always possible, and we prove that it is
    \NP-complete to decide even for graphs of maximum degree~$4$.
    The analogous problem for the length metric (visiting each vertex the
    fewest possible times) is of course polynomial, and was a helpful
    stepping stone toward the full polynomial-time algorithm
    \cite{demaine2024graph}.
  \item
    It follows from our reductions that both {\threading} and {\perfect}
    are \NP-complete even when we further constrain the threading to
    at most two visits per edge, a problem we call \double. 
  \end{itemize}
\item 
  In Section~\ref{sec:max-degree-3}, we provide polynomial-time algorithms for {\perfect} and {\double}
    in graphs of maximum degree $\leq 3$.
    As we show these problems \NP-hard for graphs of maximum degree $4$, our results give a complete characterization of the complexity of these two problems.
    (It remains open whether the more general \threading~problem is hard
    in graphs of maximum degree $\leq 3$.)
  \item 
  In Section~\ref{sec:special}, we further analyze interesting special cases of
  {\threading} under our turn metric, in particular developing several more
  polynomial-time algorithms:
  \begin{itemize}
  \item 
    When each edge must be traversed \emph{exactly} twice, \exactlydouble, we show that the problem is equivalent to solving a sequence of \textsc{Traveling Salesman} problems, one at each vertex. We use this fact to prove the intractability of {\exactlydouble} for unbounded-degree graphs, to develop polynomial-time approximation algorithms, and to develop exact algorithms for graphs of logarithmic degree or for minimizing the number of turns (turn costs $\in \{0,1\}$).
  \item We present a strategy for threading on rectangular grid graphs.
  \item 
    If the ratio between the maximum and minimum turn costs is~$r$,
    we provide a $2r$-approximation algorithm.
  \end{itemize}

\end{itemize}

Minimizing turn cost is a generalization of minimizing length
(e.g., define the $e \to f$ turn cost to be $\frac{|e|+|f|}2$),
but minimum-turn threadings can have arbitrarily longer-than-optimal lengths
(see \autoref{fig:lower-bound}).

\subsection{Related Work}

Turn costs have been studied for several other problems.
In the angular-metric traveling salesman problem~\cite{aggarwal2000angular},
the goal is to minimize the total absolute turn angle of the tour. 
In minimum-turn milling \cite{arkin2005optimal}, the goal is to cover a region
with a moving tool while minimizing the total turn cost or number of turns,
motivated by machinery such as mills which operate more efficiently
moving in straight lines rather than turning.
In minimum-turn Eulerian tour \cite{ellis2015dna}, the goal is to minimize the
total turn cost; in surprising contrast to regular Eulerian tours,
this problem is NP-hard even for minimizing the number of turns
(turn costs $\in\{0,1\}$) in graphs of bounded degree.

\section{Preliminaries}
\label{sec:preliminaries}

In this section, we discuss the necessary background for \threading~ and introduce several special variants of the problem. We assume all graphs in question to be connected with a minimum degree of at least $2$. 

\begin{definition}[Threading \cite{demaine2024graph}]
\label{def:threading}
    Let $G = (V, E)$ be a graph with $n = |V|$ vertices and $m=|E|$ edges, 
    where each edge $e \in E(G)$ represents a tube and each vertex $v \in V(G)$ represents the junction where the tubes incident to $v$ meet. A \defn{threading} $T$ of $G$ is a closed walk through $G$ that visits every edge at least once, induces connected ``junction graphs'' at every vertex, and has no ``U-turns'', meaning that the walk cannot traverse the same edge twice in sequence. The \defn{junction graph} $J(v)$ of a vertex $v$ induced by a closed walk has a vertex for each tube incident to $v$, and has an edge between two of these vertices every time the walk visits $v$ immediately in between traversing the corresponding tubes. 
\end{definition}

\begin{figure*}[h]
    \centering \includegraphics[width=\textwidth]{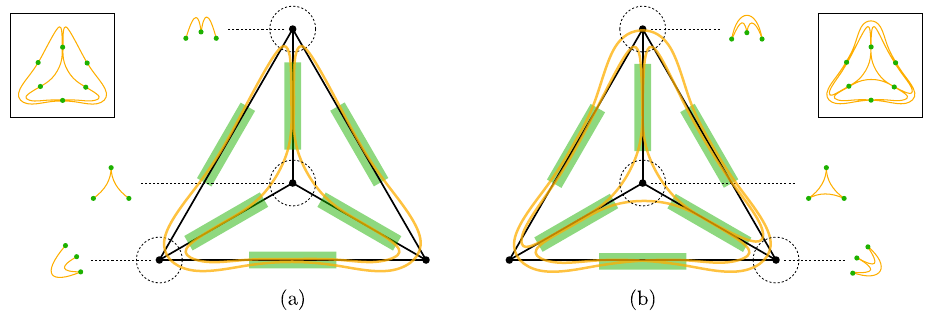}
      \caption{\label{fig:junction-graphs} 
      Two distinct threadings of a tetrahedron, where each junction graph is either a tree (a) or a cycle (b) (\cite[Figure 2]{demaine2024graph}) Their threading graphs are shown in the corners.}
\end{figure*}

While prior work~\cite{demaine2024graph} focused on minimum-\emph{length} threading, we shift our attention to minimum-\emph{turn} threading. Let us first define turn costs formally: 

\begin{definition}[Turn Cost]
\label{def:turn}
A \defn{turn} at a vertex $v$ is an unordered pair $(uv, vu)$ of distinct edges both incident to $v$.
Let $T(G)$ denote the set of all turns in $G$. 
\defn{Turn costs} for $G$ are defined by a function $\alpha: T(G) \rightarrow \mathbb{R}_{\geq 0}$, specifying a nonnegative real number for each turn in $G$.
For a closed walk $W$ in $G$ traversing a sequence of vertices $v_1, \dots, v_{\ell}$, the \defn{turn cost} of $W$ is the sum of the turn costs for all consecutive pairs of edges (turns) along $W$:
\begin{equation*}
\label{eqn:turn}
\alpha(W) = 
\sum_{i=2}^{\ell-1} \alpha(v_{i-1}v_{i}, v_{i}v_{i+1})
+ \alpha(v_{\ell-1}v_{\ell},v_{\ell}v_{1})
+ \alpha(v_{\ell}v_1, v_1v_2).
\end{equation*}
\end{definition}

Now we can state the problem of finding a minimum-turn threading as: 

\begin{problem} [\threading]
\label{prob:threading}
Given $(G, \alpha)$, find a threading $T$ of $G$ that minimizes $\alpha(T)$.
\end{problem}

The decision version of \threading is as follows: 

\begin{problem}[\decision]
\label{prob:decision-threading}
Given $(G, \alpha)$ and a nonnegative real number $c$, does there exist a threading $T$ of $G$ such that $\alpha(T) \leq c$?
\end{problem}

\subparagraph*{Na\"ive Double-Threading Solution.} We can na\"ively compute a threading. First, assign each junction graph to be a cycle; see the example in Figure~\ref{fig:junction-graphs}b. Then, compute an Euler tour in $O(m)$ time through the union of these junction graphs (Figure~\ref{fig:junction-graphs}) that prohibits U-turns~\cite[Section 2.2]{demaine2024graph}. This tour is a threading that ``double-threads'' each edge. 

\begin{definition}[Minimum-Turn Perfect Threading]
A \defn{perfect threading} is a threading where every junction graph is a tree. Not all graphs have perfect threadings~\cite{demaine2024graph}.
A \defn{minimum-turn perfect threading} is a threading where each junction graph is a minimum spanning tree. 
\end{definition} 

By definition, each junction graph induced by a threading is connected, and so contains a spanning tree. Hence, the cost incurred by the threading at each junction must be at least that of a minimum spanning tree at that junction. Therefore, if a minimum-turn perfect threading exists, it is a minimum-turn threading. It follows that finding a minimum-turn perfect threading is a subproblem of \threading.

\begin{problem}[\perfect]
Given $(G, \alpha)$, 
find a minimum-turn perfect threading of $G$, if it exists. 
\end{problem} 

At times, it may be pragmatic to limit the number of traversals through a tube, which motivates our final problem definition: 

\begin{problem}[\double]
\label{prob:double-threading}
Given $(G, \alpha)$, find a minimum-turn threading of $G$ that traverses each edge at most twice. 
\end{problem}

We prove our first result: \decision is in \NP. It is straightforward to verify that the same conclusion holds for the decision versions of \perfect and \double. 

\begin{theorem}
\label{thm:np}
    \decision is in \NP. 
\end{theorem}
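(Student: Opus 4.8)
The plan is to exhibit a polynomial-size certificate for a yes-instance and show it can be verified in polynomial time. The natural certificate is the threading walk $T$ itself, presented as its sequence of traversed edges (equivalently, the sequence of visited vertices $v_1, \dots, v_\ell$). The only subtlety in this approach is bounding $\ell$: a closed walk could in principle be astronomically long, so I must first argue that any yes-instance admits a \emph{short} witnessing threading, whose description length is polynomial in the size of $(G,\alpha)$.

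First I would establish the length bound. The key observation is that a threading is determined, up to the order of traversal, by the multiset of turns it uses at each vertex — equivalently by its junction graphs $J(v)$. Since turn costs are nonnegative, there is no benefit to using any turn (edge of a junction graph) more than necessary: given any threading $T$, I can extract a threading $T'$ whose junction graphs are simple (each turn used at most once, i.e.\ each $J(v)$ has no multi-edges) and with $\alpha(T') \le \alpha(T)$. Indeed, if some turn is used twice at a vertex, the corresponding transitions can be spliced out by an Eulerian-style shortcut without destroying connectivity of the junction graph or increasing cost, a standard transition-system argument. Hence each $J(v)$ has at most $\binom{\deg v}{2}$ edges, so the total number of turns, and therefore $\ell = O(\sum_v \deg(v)^2) = O(m^2)$, is polynomial. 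Thus there exists an optimal-or-better threading whose vertex sequence has polynomial length and serves as a valid certificate.

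Next, given such a candidate sequence $v_1,\dots,v_\ell$, the verifier checks in polynomial time the three defining conditions of \autoref{def:threading} together with the budget. Specifically: (i) each consecutive pair $v_i v_{i+1}$ (cyclically) is an edge of $G$ and no edge is immediately repeated (no U-turns); (ii) every edge of $E(G)$ appears at least once among these pairs; (iii) for each vertex $v$, build the junction graph $J(v)$ by scanning the walk and recording, at each visit to $v$, the edge between the two incident tubes used, then test that $J(v)$ is connected via a single graph search. Finally, (iv) compute $\alpha(T)$ by summing the per-turn costs along the sequence as in \autoref{def:turn} and verify $\alpha(T) \le c$. Each check is a single pass or a standard connectivity test, so verification runs in time polynomial in $\ell$ and the size of $G$.

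The main obstacle is entirely in the length bound of the first step: without it, the raw walk is not obviously a polynomial-size certificate, and one might worry that forcing connectivity of every junction graph could require a long walk. The splicing argument resolves this by showing that a canonical short threading of no greater cost always exists, after which verification is routine. The same certificate-and-verification scheme applies verbatim to the decision versions of \perfect and \double: for \perfect one additionally checks that each $J(v)$ is a tree (connected and acyclic), and for \double one checks that no edge appears more than twice in the sequence.
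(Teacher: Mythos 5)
Your verifier and the overall certificate-and-length-bound strategy match the paper's, but the argument for the length bound — the crux of the whole proof — has a genuine gap. You claim that any threading $T$ can be converted to a threading $T'$ with $\alpha(T')\le\alpha(T)$ in which every junction graph is \emph{simple} (each turn used at most once). This is false, and the paper's own \autoref{fig:lower-bound} is the counterexample: it exhibits a graph with degree-$3$ junctions whose minimum-turn threading must traverse an edge $\Omega(m)$ times. At a degree-$3$ vertex a simple junction graph has maximum degree $2$, and the degree of a tube in $J(v)$ equals the number of times that edge is traversed; so your claim would force every edge incident to a degree-$3$ vertex to be traversed at most twice, contradicting the example. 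The underlying problem is that ``splicing out'' a single duplicate turn at $v$ is not a well-defined operation on a closed walk: deleting one transition at $v$ reduces the traversal counts of the two edges $uv$ and $vw$ by one, and those counts must simultaneously drop in $J(u)$ and $J(w)$, which forces further deletions there. The only consistently removable unit is an entire closed sub-walk, and removing such a sub-walk can disconnect a junction graph, so it cannot always be done.

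The paper works at exactly this coarser granularity: it repeatedly deletes closed sub-walks (cycles) of $T$ that do not disconnect any junction graph, and then bounds the result not by ``each turn once'' but by a pigeonhole argument — if an edge were traversed more than $|T(G)|+1$ times, among the resulting cycles through it one would cover no turn not already covered by the others and could be deleted, a contradiction. This yields the weaker but correct bound of $O(m^2)$ traversals per edge (hence a polynomial-size witness), rather than the $O(\deg)$ traversals per edge your simple-junction-graph claim would imply. Your verification step (checking the walk, the no-U-turn condition, edge coverage, connectivity of each $J(v)$, and the cost budget, plus the extra checks for \perfect and \double) is correct and matches the paper; to repair the proof you need to replace the splicing argument with a cycle-deletion-plus-pigeonhole argument of this kind.
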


\begin{proof}

A purported solution $T$ to \decision can be verified by checking that $T$ is indeed a threading of $G$ and that $\alpha(T) \leq c$. Each of these checks can be performed in $O(|T|)$ time. 

We now prove that, for any threading $T$, there exists a threading $T'$ with size polynomial in $m$ such that $\alpha(T') \leq \alpha(T)$, thereby showing that a polynomial-size witness for the solution exists. We can construct $T'$ by repeatedly deleting cycles in $T$ that do not disconnect any junction graphs until no such cycles remain.  We next show that $T'$ traverses each edge at most $O(m^2)$ times.

Assume for a contradiction that $T'$ traverses an edge $uv \in E(G)$ more than $|T(G)|+1$ times, where $T(G)$ denotes the set of all turns in $G$. Then $T'$ contains at least $|T(G)|+1$ cycles, each formed via a revisit to $uv$. By the pigeonhole principle, there exists at least one cycle that fails to traverse a turn not already covered by the other cycles. This redundant cycle may be removed to produce a shorter walk for which junction graphs are still connected. This contradicts our construction of $T'$. Hence, it follows that $uv$ is traversed $\leq |T(G)|+1$ times. Since $|T(G)| = \sum_{v \in V(G)}d(v)^2 \in O(m^2)$~\cite{de1998upper}, our conclusion follows. 
\end{proof}

While the proof of Theorem~\ref{thm:np} demonstrates that an edge is traversed $O(m^2)$ times, Figure~\ref{fig:lower-bound} shows an example where an edge is traversed $\Omega(m)$ times.

\begin{figure*}[h]
\centering
\includegraphics[width=\textwidth]{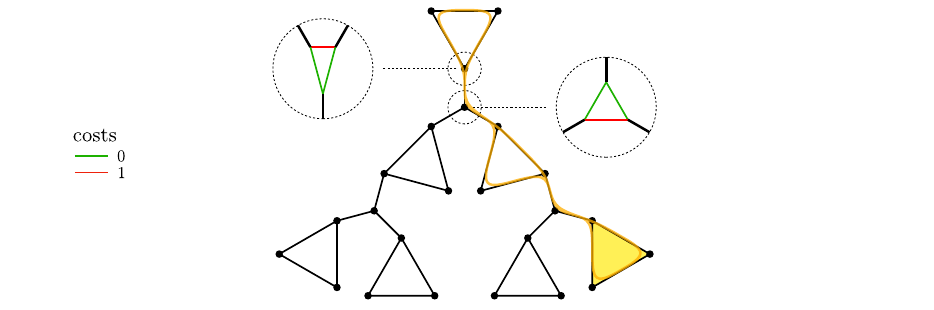}
\caption{\label{fig:lower-bound} 
A tree-like graph with degree-3 junctions as marked by the circles. A green line connecting an edge pair indicates a turn cost of $0$, whereas a red line indicates a turn cost of $1$. All other turns have zero cost. The minimum-turn threading of this graph deviates far from a perfect threading --- it traverses the ``root'' triangle as many times as there are ``leaf'' triangles.}
\end{figure*}

\section{Intractability of \threading}
\label{sec:hardness}

In Section~\ref{sec:general-hardness}, we show that \threading is \NP-complete, even with turn costs $\in \set{0,1}$, via a simple reduction from the \hamcycle problem. The reduction uses the turn costs at a single linear-degree vertex to encode the connectivity of an arbitrary graph. We later utilize this technique to give positive results in Section~\ref{sec:tsp}. In Section~\ref{sec:bounded}, we show that \threading remains NP-complete for graphs of maximum degree $4$ by simulating \oneinthreesat with vertices of degree $\leq 4$. 

\subsection{General Graphs}
\label{sec:general-hardness}

\begin{theorem}
    \threading~is \NP-complete, even if only minimizing the number of turns. 
\end{theorem}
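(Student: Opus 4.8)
The plan is to reduce from \hamcycle. Membership in \NP is immediate from \autoref{thm:np}, so the real work is \NP-hardness. Given a \hamcycle instance $H$ on vertices $\{u_1,\dots,u_n\}$, I would build a turn-cost instance $(G,\alpha)$ with a single ``apex'' vertex $v$ of degree $n$ whose $n$ incident tubes are in bijection with the vertices of $H$. To each tube I attach a small pendant gadget (e.g.\ a triangle $w_i a_i b_i$ joined to $v$ by the edge $e_i = v w_i$) so that the minimum-degree-$2$ requirement is met and the walk is forced to enter and leave the gadget through $e_i$. The encoding of $H$ lives entirely in the turn costs at $v$: set $\alpha(e_i, e_j) = 0$ when $u_i u_j \in E(H)$ and $\alpha(e_i,e_j)=1$ otherwise. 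Crucially, I would also charge every turn \emph{inside} each gadget a cost of $1$, and then ask whether a threading of cost at most $c := 4n$ exists.

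The heart of the argument is to show that a threading of cost $\le 4n$ exists if and only if $H$ is Hamiltonian, by pinning down the junction graph $J(v)$. First I would establish the lower bound: every threading must cover the edge $a_i b_i$ and re-enter $v$ through $e_i$, so it makes at least two passes through $w_i$ and at least one each through $a_i$ and $b_i$; since all gadget turns cost $1$, each gadget contributes at least $4$, for a total of at least $4n$ before even counting turns at $v$. Next, the number of passes the walk makes through $v$ on tube $e_i$ equals $\deg_{J(v)}(e_i)$, which is twice the number of times gadget $i$ is visited; so equality in the $4n$ bound forces each gadget to be visited exactly once, making $J(v)$ exactly $2$-regular, and the threading definition forces $J(v)$ connected, i.e.\ a single cycle through all $n$ tubes. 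Finally, a cost of exactly $4n$ leaves no budget for any cost-$1$ turn at $v$, so every edge of this cycle is an $H$-edge, meaning the cycle is a Hamiltonian cycle of $H$. Conversely, threading the gadgets once each in Hamiltonian-cycle order realizes cost exactly $4n$, and I would check routinely that this walk is a valid threading (connected junctions, no U-turns, all edges covered).

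The step I expect to be the main obstacle---and the reason the gadget-internal turns are charged rather than free---is ruling out ``cheating'' solutions that avoid every cost-$1$ turn at $v$ without tracing a Hamiltonian cycle. Because junction graphs may be multigraphs, a walk can reuse the same cost-$0$ turn at $v$ arbitrarily often for free, so if gadget revisits were also free then the apex could be satisfied by \emph{any} connected spanning Eulerian submultigraph of $H$, which exists whenever $H$ is merely connected---collapsing the reduction to a polynomial connectivity test. Charging each gadget visit (via the unit-cost gadget turns) is precisely what makes repeated visits strictly suboptimal, forcing $J(v)$ to be a genuine $2$-regular single cycle rather than an arbitrary connected even submultigraph, and thereby recovering Hamiltonicity. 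I would conclude by noting that the construction has size polynomial in $H$ and that all turn costs lie in $\{0,1\}$, so the hardness already holds for minimizing the number of turns.
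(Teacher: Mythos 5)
Your proposal is correct and matches the paper's proof essentially exactly: the same apex-plus-triangle-arm construction, the same $\{0,1\}$ turn costs encoding adjacency at the central vertex, the same threshold of $4n$, and the same two-part argument that equality forces one traversal per arm and zero cost at the apex, so the junction graph there is a Hamiltonian cycle. Your remark on why the gadget-internal turns must be charged is a correct articulation of the same point the paper relies on implicitly.
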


We prove the hardness of \threading by reducing from \hamcycle, which is \NP-hard~\cite{garey1979computers}. The \hamcycle~problem asks whether a graph $G$ contains a Hamiltonian cycle, a cycle that visits every vertex exactly once. 

Given an instance $G$ of \hamcycle~with $n$ vertices and $m$ edges, we construct an instance $(H, \alpha)$ of \threading~as follows (Figure~\ref{fig:ham-cycle}): First, create $n$ copies of the cycle graph $C_3$, one per vertex in $G$, and connect each cycle via a bridge to a central vertex $c$. We refer to each cycle and its incident bridge as an ``arm'' and denote the arm corresponding to vertex $v_i$ as $a_i$. Next, at vertex $c$, assign each pair of incident edges from distinct arms $a_i$ and $a_j$ a turn cost of $0$ if $v_iv_j \in E(G)$. Assign all remaining turns a cost of $1$. By design, traversing each arm incurs a cost of $4$. The time to construct $(H, \alpha)$ is $O(m+n)$. 

\begin{figure*}[h]
    \centering \includegraphics[width=\textwidth]{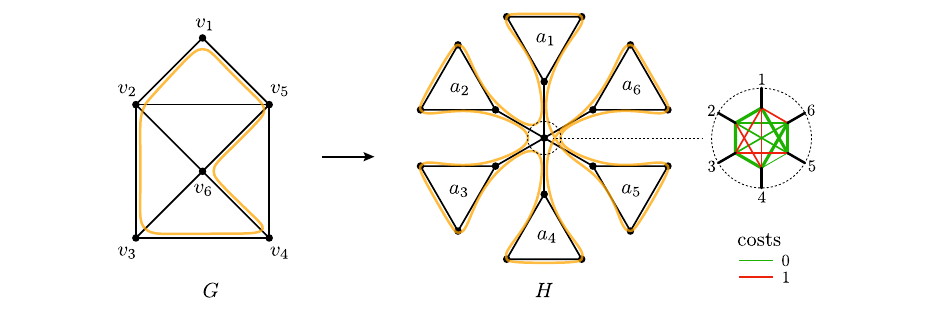}
      \caption{\label{fig:ham-cycle} 
    Constructing an instance $H$ of \threading from an instance $G$ of \hamcycle. The turn costs at the central vertex in $H$ are depicted inside the right-most circle, where a green line connecting a pair of edges indicates a turn cost of $0$ and a red line indicates a turn cost of $1$. Turn costs for unlabeled turns are assumed to be $1$. The edges of the junction graph induced by the threading in orange are in bold.}
\end{figure*}

\begin{lemma}
\label{lem:ham-cycle}
Graph $G$ contains a Hamiltonian cycle if and only if its corresponding instance $(H, \alpha)$ to \threading~has a threading of turn cost $4n$. 
\end{lemma}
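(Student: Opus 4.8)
The plan is to prove both directions by accounting for the turn cost contributed \emph{inside} each arm and \emph{at} the center $c$ separately, and then showing that the global bound $4n$ is met exactly when the cyclic order in which the arms are visited traces a Hamiltonian cycle of $G$.

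First I would establish a per-arm lower bound. Fix an arm $a_i$ with triangle vertices $t_i$ (the bridge endpoint), $p_i$, and $q_i$. Since every turn not at $c$ costs $1$, the cost contributed at $t_i, p_i, q_i$ equals the number of turns the walk makes there, which equals the number of visits to each vertex. Connectivity of the junction graphs forces $J(p_i)$ and $J(q_i)$ (each on two tube-vertices) to contain at least one edge, hence at least one turn each, and $J(t_i)$ (on three tube-vertices) to contain at least two edges, hence at least two turns. This yields a cost of at least $1+1+2=4$ per arm, so any threading costs at least $4n$ from the arms and at least $0$ at $c$, i.e., at least $4n$ in total. I would then characterize equality: a threading of cost exactly $4n$ must contribute exactly $4$ per arm and exactly $0$ at $c$. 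Counting visits via incident edge-traversals (each visit consuming two endpoints), the ``exactly $4$'' condition forces each triangle edge to be traversed exactly once and the bridge $ct_i$ exactly twice, so each arm is entered and left through its bridge exactly once. The U-turn prohibition together with connectivity of $J(t_i)$ then pins the two visits at $t_i$ to pair the bridge with each triangle edge, so the unique local pattern is $c \to t_i \to \{p_i,q_i\} \to \{q_i,p_i\} \to t_i \to c$.

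The key step is converting this rigidity into a statement about the \hamcycle instance. Since each bridge is used exactly twice, the walk visits $c$ exactly $n$ times, and its successive arm-visits define a cyclic order $a_{\sigma(1)}, \dots, a_{\sigma(n)}$ in which every arm appears exactly once; the $n$ turns at $c$ are precisely the pairs $(ct_{\sigma(k)}, ct_{\sigma(k+1)})$ (indices mod $n$). Requiring all of these to cost $0$ is, by the construction of $\alpha$, exactly the condition $v_{\sigma(k)}v_{\sigma(k+1)} \in E(G)$ for every $k$, i.e., that $v_{\sigma(1)}, \dots, v_{\sigma(n)}$ is a Hamiltonian cycle of $G$. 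For the converse I would start from a Hamiltonian cycle of $G$, read off the visiting order of the arms, and assemble the closed walk that traverses each arm by the local pattern above in that cyclic order; I would then verify it is a genuine threading ($J(p_i), J(q_i)$ are single edges and $J(t_i)$ a path, all connected; $J(c)$ is the $n$-cycle induced by the visiting order, hence connected; and no edge is repeated in immediate succession) with cost $4$ per arm plus $0$ at $c$, totaling $4n$.

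I expect the main obstacle to be the equality analysis: ruling out non-canonical arm traversals (extra bridge crossings or revisits that nevertheless keep the junction graphs connected) and ensuring that the forced structure at $c$ yields a cycle visiting \emph{every vertex exactly once} rather than merely some closed walk. Careful visit-counting at the degree-$2$ and degree-$3$ arm vertices, combined with the U-turn constraint at $t_i$, is what makes this step go through.
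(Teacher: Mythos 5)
Your proposal is correct and follows essentially the same route as the paper: construct the walk from the Hamiltonian cycle for one direction, and for the other show that cost $4n$ forces each arm to be traversed exactly once with zero cost at $c$, yielding the cycle. Your per-arm lower bound via junction-graph connectivity at $t_i, p_i, q_i$ is a more careful justification of what the paper asserts as ``by design, traversing each arm incurs a cost of $4$,'' but it is the same argument in substance.
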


\begin{proof}

Suppose $G$ has a Hamiltonian cycle $C$. Label the vertices of $G$ such that $C$ is given by the sequence of vertices $v_1, v_2, . . . , v_n, v_1$. Construct a walk $T$ through $H$ that traverses the arms the same order: $a_1, a_2, \dots, a_n, a_1$. Here, $a_i a_j a_k$ indicates the walk entering arm $a_j$ directly from arm $a_i$, traversing all edges of $a_j$ in either a clockwise or counter-clockwise fashion, and then exiting $a_j$ directly to arm $a_k$. This walk creates a connected junction graph at each vertex. Most notably, at the central vertex $c$, the junction graph is given by the Hamiltonian cycle $C$ where each vertex $v_i$ maps to the bridge from arm $a_i$. It follows that $T$ is a threading of $H$. This threading incurs a turn cost of $4n$: Each arm of cost $4$ is traversed exactly once because $C$ is a Hamiltonian cycle, and moving between consecutive arms $a_i$ and $a_j$ in $H$ incurs zero cost because $v_iv_j$ is in the Hamiltonian cycle and is hence in $E(G)$. 

Now assume there exists a threading $T$ of $H$ of turn cost $4n$. By definition, $T$ visits each arm at least once, and upon entering an arm, it fully traverses the arm because U-turns are prohibited. Given that each arm incurs a cost of $4$, turn cost $\alpha(T)=4n$ implies two facts: (1) $T$ traverses each arm \textit{exactly} once and (2) incurs zero turn cost at vertex $c$. Labeling the sequence of arms traversed by $T$ as $a_1, a_2, \dots, a_n$, we map $T$ to vertices $v_1, v_2, \dots, v_n, v_1$ in $G$. Consecutive vertices in this sequence are adjacent by (2), hence this sequence forms a cycle. Furthermore, each vertex is visited once per (1), so this cycle is Hamiltonian. 
\end{proof}

It follows from Theorem~\ref{thm:np} and Lemma~\ref{lem:ham-cycle} that \threading is \NP-complete, even with turn costs $\in \set{0,1}$. 

\subsection{Maximum-Degree-4 Graphs}
\label{sec:bounded}

Next we consider bounded degree graphs and show that even the special case of \perfect~is \NP-complete in this setting.

\begin{theorem}\label{thm:hardboundeddegree}
     \perfect~in graphs of maximum degree $4$ is \NP-complete.
\end{theorem}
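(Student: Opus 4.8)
The plan is to reduce from \oneinthreesat, which is \NP-hard. Membership in \NP\ follows by the argument of Theorem~\ref{thm:np}, and is in fact easier here: a perfect threading traverses each edge at most three times (its junctions are trees, so every tube has degree at most three in some $J(v)$), so a purported solution is a polynomial-size closed walk that one checks in linear time to be a perfect threading all of whose junction graphs are minimum spanning trees. So the real work is in the reduction.

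First I would record the local structure forced by the perfect-threading and maximum-degree-$4$ constraints. Writing $t_e$ for the number of traversals of edge $e$, the junction graph $J(v)$ is a tree on the $d(v)$ incident tubes in which tube $e$ has degree $t_e$; hence $\sum_{e \ni v} t_e = 2(d(v)-1)$. This pins down the possibilities: a degree-$2$ vertex has both tubes single ($t_e=1$); a degree-$3$ vertex has $J(v)=P_3$, so exactly one incident tube is \emph{doubled} ($t_e=2$, the center of the path) and the other two are single; and a degree-$4$ vertex has $J(v)$ either a path $P_4$ (two doubled tubes) or a star $K_{1,3}$ (one tripled tube, $t_e=3$). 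In the instances I build, the turn costs will make every star non-minimal, so each degree-$4$ junction is a path with exactly two doubled tubes. Then the set $D$ of doubled edges has degree $0,1,2$ at vertices of degree $2,3,4$, so $D$ is a disjoint union of paths and cycles whose path-endpoints are exactly the degree-$3$ vertices. The whole problem thus reduces to a labeling question --- which edges are doubled --- subject to these local constraints, where the role of the turn costs $\alpha$ is precisely to dictate, through the requirement that each $J(v)$ be an MST, which local labelings are admissible at each vertex.

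The clause gadget is then immediate and is the cleanest part: a clause is a single degree-$3$ vertex whose three incident tubes are its three literals, with all three turn costs equal. Every $P_3$ is then a minimum spanning tree, so \emph{exactly one} literal tube is doubled --- precisely the $1$-in-$3$ condition, provided I arrange that ``literal tube doubled'' is synonymous with ``literal satisfied.'' For the variable gadget I would exploit the degree-$4$ vertex: assigning turn cost $0$, $1$, and $2$ to the three perfect matchings of the four incident tubes (both turns of a matching sharing a cost) makes exactly two spanning trees minimum, giving the vertex a clean binary state. Chaining such vertices into a \emph{rail} by identifying tubes of consecutive vertices forces the entire rail into a single state, which I read as the variable's value, and I attach degree-$3$ \emph{terminals} (with equal turn costs, as in the clause) whose literal tube is doubled in one rail-state and single in the other.

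The step I expect to be the main obstacle is fan-out together with negation, and it is forced by the rigidity above: since $D$ has maximum degree $2$, a doubled signal is \emph{conserved} --- it can be routed and reversed but never split --- so a variable occurring in many clauses cannot simply broadcast its value from one vertex. I would handle this by making each doubled path connect a clause's satisfied-literal tube to a matching terminal inside the variable gadget, and by synchronizing all of a variable's terminals through the rail so that its positive-occurrence terminals are doubled exactly when its negative-occurrence terminals are single; engineering this anti-synchronization with degree-$\le 4$ vertices, and verifying that no unintended spanning tree slips in as an MST, is the delicate part. Finally I would check both directions: a satisfying $1$-in-$3$ assignment yields a globally consistent choice of doubled edges, hence a tree at every vertex that is an MST by construction, and since all junction graphs are connected this is realized by a single closed walk (an Euler tour of the union of the junction graphs, exactly as in the na\"ive double-threading solution), giving a minimum-turn perfect threading; conversely, in any minimum-turn perfect threading the single forced doubled tube at each clause vertex, together with the rail-enforced consistency of each variable, yields a satisfying $1$-in-$3$ assignment.
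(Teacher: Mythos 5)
Your overall strategy matches the paper's: reduce from \oneinthreesat, use a degree-$3$ junction with equal turn costs as the clause gadget (forcing exactly one doubled literal tube), and use MST constraints at variable junctions to create a binary state. Your structural analysis is also sound and goes a bit beyond what the paper states explicitly: the degree sum $\sum_{e \ni v} t_e = 2(d(v)-1)$, the classification of junction trees at degree $\le 4$, and the observation that the doubled-edge set $D$ is a disjoint union of paths and cycles with endpoints exactly at degree-$3$ vertices are all correct and useful. Your degree-$4$ variable gadget (costs $0,1,2$ on the three perfect matchings of the turns, making exactly the two Hamiltonian paths through the $0$-matching minimal) is a legitimate alternative to the paper's degree-$3$ variable junction, which instead uses asymmetric costs to force the MST to double exactly one of the two literal edges.

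However, there is a genuine gap, and it is precisely the part you flag as ``the delicate part'': the fan-out of a literal to multiple clauses. This is not a detail --- it is the heart of the reduction, and the paper devotes its most involved gadget to it (the split gadget of Figure~\ref{fig:split_gadget} and Claim~\ref{clm:splitting_gadget}, which uses a degree-$4$ vertex plus auxiliary junctions to force four edges $a,b,c,d$ to all be threaded the same number of times). Your conservation argument (``$D$ has maximum degree $2$, so a doubled signal can never split'') correctly rules out the naive broadcast, but you then only sketch a plan --- route each clause's doubled path back to a ``matching terminal'' on the variable's rail and ``anti-synchronize'' positive and negative terminals --- without constructing the gadget or verifying that no unintended MST admits a spurious state. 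Until that gadget is exhibited and its MST analysis carried out, the reduction only works for formulas in which every literal occurs at most once, which is a trivially solvable fragment. Two smaller omissions: you do not say how dangling edges (including tubes of unused literals and the loose ends of your terminals) are closed off into a graph admitting a closed walk --- the paper handles this by duplicating the construction and joining copies, using a de-multigraph gadget to avoid multi-edges --- and your forward direction should confirm that the chosen junction trees at the rail vertices are simultaneously realizable with the doubled paths actually terminating at degree-$3$ vertices as your own structure theorem requires.
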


Since \perfect~is a special case of \threading (Section~\ref{sec:preliminaries}), we conclude: 

\begin{corollary}
    \threading~in graphs of maximum degree $4$ is \NP-complete. 
\end{corollary}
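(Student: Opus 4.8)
The plan is to derive this Corollary directly from Theorem~\ref{thm:hardboundeddegree} by a trivial many-one reduction, exploiting the lower-bound relationship between perfect and general threadings recorded in Section~\ref{sec:preliminaries}. Membership in \NP is already in hand: the decision version of \threading is exactly \decision, shown to lie in \NP in Theorem~\ref{thm:np}, and restricting the input to graphs of maximum degree $4$ only shrinks the class of instances, so it remains in \NP.

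For \NP-hardness I would reduce from the (maximum-degree-$4$) decision version of \perfect, which Theorem~\ref{thm:hardboundeddegree} shows to be \NP-hard. Given a \perfect instance $(G,\alpha)$ of maximum degree $4$, first compute, at each vertex $v$, the cost $\mu(v)$ of a minimum spanning tree of the complete junction graph on the $d(v)$ edges incident to $v$ under the weights $\alpha$; since $d(v)\le 4$, each $\mu(v)$ is found in constant time, so the total lower bound $L:=\sum_{v\in V(G)}\mu(v)$ is computable in polynomial time. The reduction then outputs the \threading decision instance $(G,\alpha,L)$, leaving $G$ and $\alpha$ untouched so that the maximum degree stays $4$.

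Correctness rests on the observation from Section~\ref{sec:preliminaries} that every threading induces connected junction graphs, so each junction graph contains a spanning tree and the turn cost of any threading satisfies $\alpha(T)\ge L$. Hence $L$ is a universal lower bound, and a threading of cost $\le L$ exists if and only if one of cost \emph{exactly} $L$ exists, which in turn happens if and only if some threading realizes a minimum spanning tree at every junction---that is, if and only if a minimum-turn perfect threading of $(G,\alpha)$ exists. This is precisely the property that makes $(G,\alpha)$ a \textsc{yes}-instance of \perfect, so $(G,\alpha)$ is accepted by \perfect exactly when $(G,\alpha,L)$ is accepted by \threading, establishing a valid polynomial-time reduction.

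There is no deep obstacle here; the only point requiring care is bookkeeping. I must ensure that the threshold $L$ I attach is exactly the per-junction minimum-spanning-tree lower bound used to define a minimum-turn perfect threading, so that the two accepting conditions coincide, and I must verify that the transformation preserves both connectivity and the degree bound (it does, since it copies $G$ verbatim). Together with membership in \NP, this shows that \threading in graphs of maximum degree $4$ is \NP-complete.
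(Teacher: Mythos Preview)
Your approach is precisely the paper's: it derives the corollary in one line from the remark in Section~\ref{sec:preliminaries} that \perfect is a ``subproblem'' of \threading, and you have simply unpacked that remark into an explicit many-one reduction with threshold $L=\sum_{v}\mu(v)$. Membership in \NP via Theorem~\ref{thm:np} and the polynomial-time computation of $L$ are handled correctly.

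There is, however, a small gap in the biconditional you assert. You claim that a threading of cost exactly $L$ exists if and only if a minimum-turn perfect threading exists, but the forward direction can fail when zero-cost turns are allowed. Take two triangles sharing a single degree-$4$ vertex $v$, with every turn cost equal to $0$. Then $L=0$ and a threading of cost $0$ certainly exists (e.g.\ the na\"ive double threading), yet no perfect threading exists at all: the four degree-$2$ vertices force every edge to be single-threaded, so each edge incident to $v$ would have degree~$1$ in $J(v)$, making $J(v)$ a two-edge matching rather than a connected tree. Thus your map, applied to arbitrary \perfect instances, is not a valid many-one reduction. The fix is immediate and does not change your strategy: the hard instances produced by the \oneinthreesat reduction of Theorem~\ref{thm:hardboundeddegree} use only strictly positive turn costs (values $1$, $2$, $3$), and on such instances $\alpha(T)=L$ forces each $J(v)$ to have weight $\mu(v)$ with no room for extra edges, so each $J(v)$ is literally an MST and $T$ itself is a minimum-turn perfect threading. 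Either restrict your reduction to positive-cost instances, or compose directly with the \oneinthreesat reduction; then your argument goes through verbatim.
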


We prove Theorem~\ref{thm:hardboundeddegree} via a reduction from \oneinthreesat, which is \NP-hard~\cite{garey1979computers}. 

\begin{definition}[\oneinthreesat]
    Given $m$ clauses consisting of three literals each, over a set of $n$ boolean variables, the \oneinthreesat~problem asks to determine whether there exists a truth assignment to the variables such that each clause contains exactly one \true~literal.
\end{definition}

For example, the formula 
\begin{equation}
\label{eqn:sat-formula}
    \varphi = 
    \clause{w, x, \lnot y} \wedge 
    \clause{\lnot w, x, \lnot z} \wedge 
    \clause{x, y, z},
\end{equation}
where $\clause{x,y,z}$ represents that exactly one of $x,y,z$ is \true,
has a satisfying assignment of $x = \false, y=\true, z=\false $ and $ w =\true$.

Given an arbitrary instance $\varphi$ of \oneinthreesat, we construct an instance of \perfect. This graph comprises several gadgets to simulate $\varphi$: a ``variable gadget'' for each variable, a ``clause gadget'' for each clause, and ``split gadgets'' to handle the occurrence of the same literal in multiple clauses. 

\subparagraph*{Variable Gadget.} 
For each variable $x$, construct a degree-$3$ junction as shown in \autoref{fig:var_gadget}a. A turn connected by a green line indicates a turn cost of 1 and a red line indicates a turn cost of 2. The left and right edges incident to the junction represent the literals $x$ and $\lnot x$, respectively. A key observation is that any threading that forms a minimum spanning tree (MST) at this junction will single-thread the top edge and select one of the bottom edges to single-thread and the other to double-thread. We use this double-threading to represent the assignment of $x$: If the bottom left edge is double-threaded, then $x$ is assigned \true; conversely, if the bottom right edge is double-threaded, then $x$ is assigned \false, that is, $\lnot x$ is assigned \true (\autoref{fig:var_gadget}). 

\begin{figure*}[ht]
    \centering
        \includegraphics[width=\textwidth]{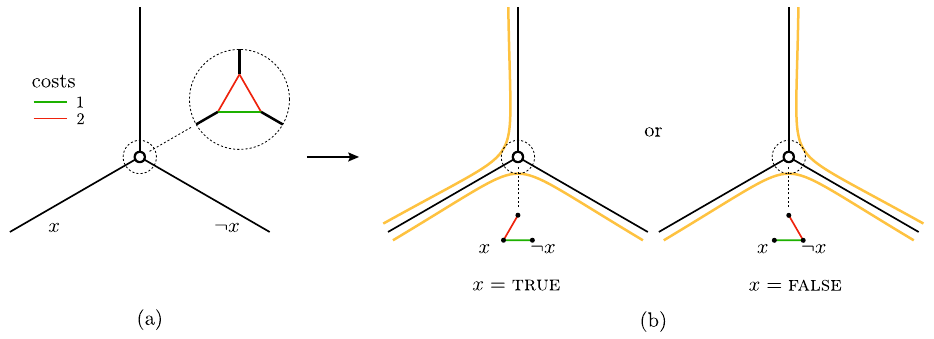}
          \caption{\label{fig:var_gadget} (a) The junction representing variable $x$, with green indicating a cost of $1$ and red indicating a cost of $2$. (b) Two possible minimum-cost threadings of the junction and their corresponding junction graphs; double-threading the left and right edges of the junction corresponds to setting $x$ and $\lnot x$ to \true, respectively.}
    \end{figure*}

\subparagraph*{Clause Gadget.} To simulate a clause $\clause{x,y,z}$, we construct a junction that is adjacent to variable gadgets $x$, $y$, and $z$ or their copies (as can arise from the use of the splitting gadget), with all turn costs being $1$  (\autoref{fig:clause_gadget}). Any threading that induces an MST on this junction will double-thread exactly one of the incident edges and single-thread the remaining two. By using the double-threading to represent a \true assignment, this corresponds to having exactly one literal out of $x,y,z$ assigned \true. 

\begin{figure*}[h]
    \centering
        \includegraphics[width=\textwidth]{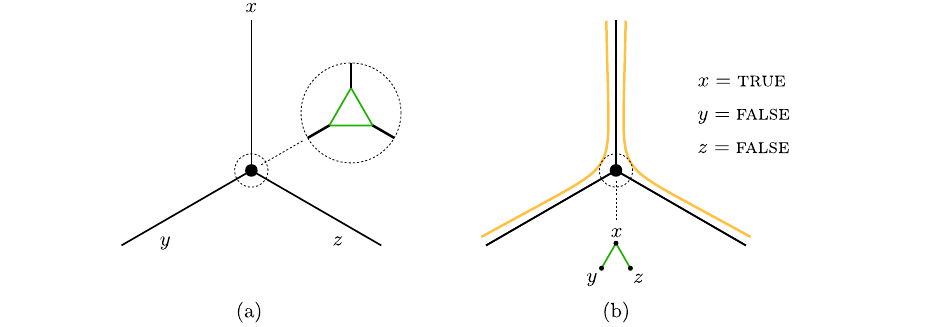}
          \caption{\label{fig:clause_gadget} (a) The gadget for clause $\clause{x, y, z}$ and (b) an example threading that corresponds to assigning $x$ to \true.}
    \end{figure*}

    \subparagraph*{Split Gadget.} This final and most involved gadget allows us to ``split'' an edge representing a literal $x$. It transforms a single edge simulating $x$ into three edges, all simulating $x$ and thus receiving the same assignment. The gadget is depicted in \autoref{fig:split_gadget}, where the green, red, and blue lines indicate turns with costs of $1$, $2$, and $3$, respectively. 

        \begin{figure*}[h]
        \centering
          \includegraphics[width=\textwidth]{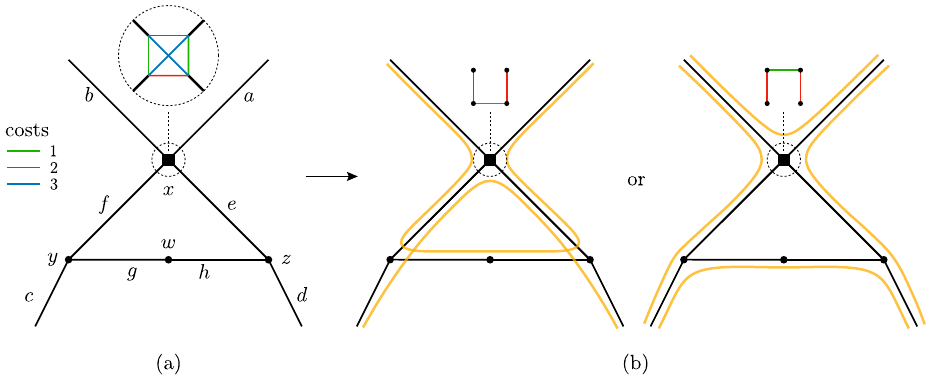}
          \caption{\label{fig:split_gadget} A split gadget. All unmarked turns have unit cost.}
    \end{figure*}

    \begin{claim} \label{clm:splitting_gadget}
        Any \perfect~of the split gadget (\autoref{fig:split_gadget}) threads edges $a,b,c,d$ the same number of times. 
    \end{claim}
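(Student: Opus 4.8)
The plan is to reduce the whole claim to one structural invariant. For any closed walk $W$ and any edge $e = uv$, the number of times $W$ traverses $e$ equals the degree of the tube-vertex for $e$ in the junction graph $J(u)$, and likewise equals its degree in $J(v)$. This is because each traversal of the tube $e$ contributes exactly one incidence at $u$ (as either the incoming or outgoing tube of some visit to $u$) and exactly one incidence at $v$. So I would first record this \emph{degree--traversal correspondence} as a standalone observation: $\deg_{J(u)}(e)=\deg_{J(v)}(e)=(\text{number of traversals of }e)$. Since a solution to \perfect requires every junction graph to be a minimum spanning tree (MST), this lets me read the traversal count of every edge directly off its degree in the relevant MSTs, converting the combinatorial claim into a purely local question about MST degree sequences.

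Second, I would pin down the MST shape at each junction of the split gadget from the given turn costs (green/red/blue $= 1/2/3$). At a degree-$d$ junction every spanning tree has $d-1$ edges and total degree $2(d-1)$; for the degree-$3$ junctions composing the gadget the only tree on three tube-vertices is a path, so in any perfect threading exactly one incident edge is double-threaded (the path's middle, of degree $2$) and the other two are single-threaded (degree $1$). Because the junction graph must be a \emph{minimum} spanning tree, it is the path realized by the two cheapest turns; the weights are chosen precisely so that this forces which single tube is double-threaded at each junction, up to ties I will track explicitly.

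Third, I would chain these local constraints using the global consistency built into the correspondence: the degree of each internal edge must agree in the junction graphs at both of its endpoints. Starting from the input edge $a$ and walking through the internal junctions of \autoref{fig:split_gadget} toward the output edges $b,c,d$, each junction's forced MST shape transfers the traversal count from one incident edge to the next, so the single-/double-threaded status propagates uniformly and all of $a,b,c,d$ inherit the same count.

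The main obstacle is the case analysis through the internal degree-$3$ junctions: I must verify that the turn costs genuinely force the intended MST at each junction -- that no other spanning tree of equal cost places the double-thread on a different tube in a way that would break the chain -- and that whatever MST ties remain are all still consistent with equal counts on $a,b,c,d$. The cost-$3$ (blue) turns are the key device here, inserted to make the unwanted path shapes strictly more expensive than the intended one; confirming that they dominate in every configuration is the crux of the argument.
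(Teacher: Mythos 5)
Your framework---identifying each edge's traversal count with its degree in the junction graphs at both of its endpoints, and propagating these counts through the forced MST shapes---is exactly the mechanism of the paper's proof: its step ``counting the number of traversals through vertex $y$'' is your degree--traversal correspondence, and its chain through $x$, $w$, $y$ is your propagation. The gap is in the one structural assertion you actually commit to, and it sits precisely at the step you defer as ``the crux.'' The split gadget is \emph{not} composed only of degree-$3$ junctions: its central vertex $x$ is incident to the four tubes $a,b,e,f$---this gadget is where the reduction genuinely uses degree $4$. At a degree-$4$ junction the spanning trees of the junction graph are not all ``paths with a unique degree-$2$ vertex'': they include stars (which would force one tube to be threaded three times) and three combinatorially distinct path orderings, only two of which (interior pair $\{e,f\}$ or interior pair $\{a,b\}$) yield the dichotomy the paper needs, namely that $a,b$ are single-threaded while $e,f$ are double-threaded or vice versa. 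Ruling out the star and the mixed paths such as $a$--$e$--$b$--$f$ is exactly what the cost-$2$ and cost-$3$ turns at $x$ are engineered to do, and it is the substantive content of the claim; your blanket degree-$3$ statement (``exactly one incident edge is double-threaded'') would silently skip the only junction where the MST analysis is nontrivial. The remaining propagation through the degree-$3$ junctions $w$ and $y$ is as routine as you describe, so once you replace the degree-$3$ assumption at $x$ with an explicit enumeration of the spanning trees of $J(x)$ and a cost comparison showing that only the two intended paths are minimum, your argument closes and coincides with the paper's.
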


    \begin{proof}
        First, note that the junction graph at vertex $x$ must be one of the two MSTs shown in \autoref{fig:split_gadget}b. Therefore, under any threading that induces an MST for $J(x)$, either edges $a$ and $b$ are single-threaded while $e$ and $f$ are double-threaded, or vice versa. Next, consider vertex $w$. To induce an MST on $w$, edges $g$ and $h$ must be single-threaded. Hence, by simply counting the number of traversals through vertex $y$, if edge $f$ is single-threaded, then edge $c$ must be threaded twice, and vice versa. The same conclusion applies symmetrically to edges $e$ and $d$. We conclude that a perfect threading of the split gadget follows one of two options depicted in \autoref{fig:split_gadget}b. In both scenarios, the edges $a,b,c$, and $d$ are all threaded the same number of times. 
    \end{proof}

   \subparagraph*{Full Construction.} Given a \oneinthreesat~instance $\varphi=\clause{\ell_{1,1},\ell_{1,2}, \ell_{1,3}} \land \clause{\ell_{2,1},\ell_{2,2},\ell_{2,3}} \land \dots \land \clause{\ell_{m,1},\ell_{m,2},\ell_{m,3}}$ using variables $x_1, \ldots, x_n$, construct the graph $G(\varphi)$ as follows: 

   \begin{algorithm}
    \caption{Constructing a (multi)graph $G(\varphi)$ with turn costs from a \oneinthreesat formula $\varphi$.}
    \label{alg:reduction}
   \begin{enumerate}
       \item 
       For every variable $x_i$, construct a variable gadget as shown in Figure~\ref{fig:var_gadget}. We refer to the center-top edge as the ``edge simulating variable $x_i$'', and the bottom-left edge as an ``edge simulating \emph{literal} $x_i$'' and the bottom-right edge as an ``edge simulating \emph{literal} $\lnot x_i$''. 
       
       \item 
       If literal $x_i$ appears more than once in $\varphi$, construct a split gadget and connect one of its outgoing edges to the edge simulating $x_i$. Treat the remaining three outgoing edges of the gadget as ``edges simulating literal $x_i$''. Repeat this process until the number of edges simulating $x_i$ is at least the number of appearances of $x_i$ in $\varphi$. 

       \item For every clause $\clause{\ell_{i, 1}, \ell_{i, 2}, \ell_{i,3}}$, construct a clause gadget and connect it to edges simulating $\ell_{i, 1}, \ell_{i, 2}$, and $\ell_{i,3}$. 

       \item 
       \label{step:connection}
       Duplicate our entire construction up to this point. Connect each dangling edge simulating a variable or a literal directly to its copy. 
   \end{enumerate}
    \end{algorithm}

 This construction runs in $O(n+m)$ time. An example construction for \autoref{eqn:sat-formula} is provided in Figure~\ref{fig:construction}a.

\begin{figure*}[ht]
\centering
\includegraphics[width=\textwidth]{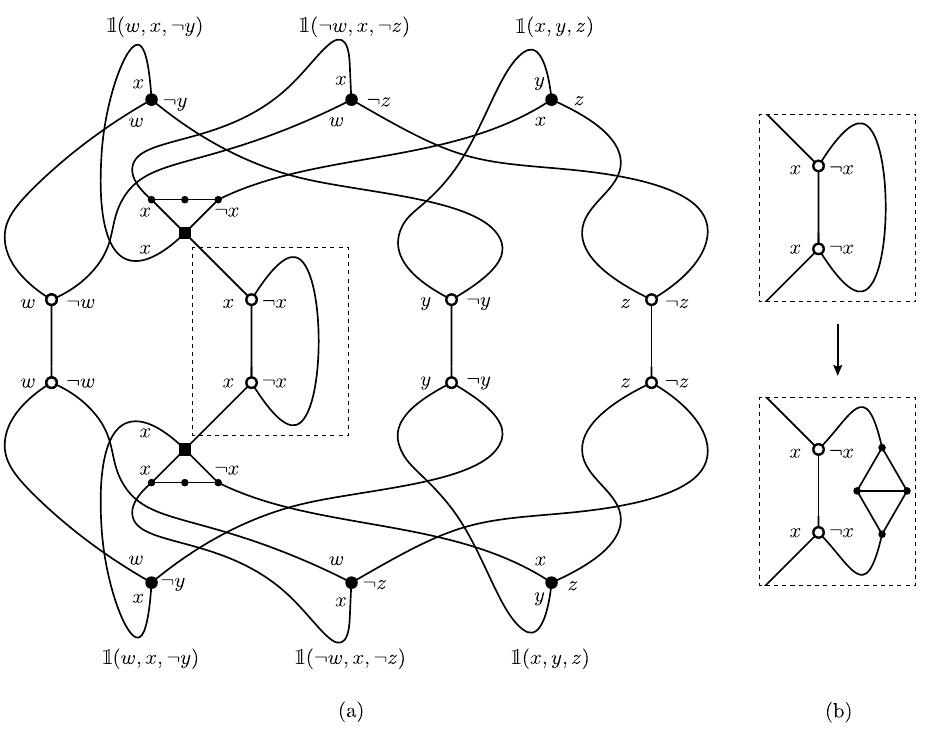}
\caption{\label{fig:construction} (a) Construction of multigraph $G(\varphi)$ for the formula $\varphi = \clause{x,y,z}\land \clause{x, \lnot y, w} \land \clause{x, \lnot z, \lnot w}$. By directly connecting the dangling edges simulating the (unused) literal $\lnot x$ (dashed) in Step~\ref{step:connection} of Algorithm~\ref{alg:deg3perfectthreading}, we create multiedges. (b) We can instead connect the edges via a de-multigraph gadget (\autoref{fig:demultigraphing}) to obtain a simple graph $G'(\varphi)$.}
\end{figure*}

    \begin{lemma}
    \label{lem:sat}
        $G(\varphi)$ has a minimum-turn perfect threading if and only if $\varphi$ has a satisfying assignment.
    \end{lemma}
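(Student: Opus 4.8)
The plan is to prove the biconditional in Lemma~\ref{lem:sat} by establishing a correspondence between satisfying assignments of $\varphi$ and minimum-turn perfect threadings of $G(\varphi)$, leveraging the local behavior of each gadget already characterized in the construction. The key structural fact is that a \emph{perfect} threading forces every junction graph to be a minimum spanning tree, so at each junction the threading must realize exactly one of the low-cost MST configurations. The proof reduces to showing that these local constraints compose consistently across the whole graph precisely when a valid $1$-in-$3$ assignment exists.

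First I would set up the encoding direction ($\varphi$ satisfiable $\Rightarrow$ threading exists). Given a satisfying assignment, I would assign to each variable gadget the MST configuration that double-threads the bottom-left edge if $x$ is \true\ and the bottom-right edge if $x$ is \false, as specified in the variable-gadget description. By Claim~\ref{clm:splitting_gadget}, each split gadget faithfully propagates the single/double-threading status of its input literal to all three of its outputs, so every edge simulating a literal $\ell$ is double-threaded exactly when $\ell$ is \true. Since each clause is satisfied with exactly one \true\ literal, exactly one of the three edges feeding each clause gadget is double-threaded, which is precisely the MST configuration the clause gadget admits. I would then argue that these locally consistent choices glue into a single closed walk: the matching traversal counts on every shared edge (guaranteed by the gadget analyses and by the direct edge-to-copy connections in Step~\ref{step:connection}) mean the multigraph of induced junction-graph edges is connected and every vertex has even degree, so an Eulerian-type argument yields a closed walk inducing exactly these MSTs, hence a minimum-turn perfect threading.

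For the converse ($\Rightarrow$), I would start from an arbitrary minimum-turn perfect threading and read off the induced MST at each junction. By definition of perfect threading each $J(v)$ is an MST, so each variable gadget is in one of its two admissible states, defining a truth assignment; each split gadget, by Claim~\ref{clm:splitting_gadget}, forces all copies of a literal to share the same threading count and hence the same truth value, so the assignment is well-defined across copies; and each clause gadget's MST constraint forces exactly one incident literal-edge to be double-threaded. Translating ``double-threaded'' back to \true, this says each clause has exactly one \true\ literal, i.e.\ the assignment is a valid \oneinthreesat\ solution. The duplication in Step~\ref{step:connection} must be handled carefully here: I would note that connecting each dangling literal-edge to its copy ensures parity/connectivity so that a global perfect threading can exist at all, and that it does not introduce spurious states beyond those already analyzed.

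The main obstacle I expect is the \emph{gluing} step in the forward direction and the \emph{consistency} step in the reverse direction --- that is, verifying that the purely local MST constraints at individual junctions can always be simultaneously realized by one connected closed walk with no U-turns. The per-gadget claims control only what happens at a single junction, but a threading is a global object, and I must ensure the chosen single/double-threading pattern on edges yields a connected union of junction graphs with consistent edge-traversal multiplicities everywhere (this is exactly why the construction duplicates everything in Step~\ref{step:connection} and why each edge's endpoints must agree on its traversal count). I would address this by treating the union of chosen MST edges as a multigraph, checking connectivity and even-degree, and invoking the U-turn-free Eulerian tour construction referenced in the preliminaries; the delicate part is confirming that the literal-to-clause and splitting connections never create a parity or connectivity obstruction when (and only when) the assignment is a genuine $1$-in-$3$ witness.
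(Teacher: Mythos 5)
Your proposal matches the paper's proof: both directions proceed by the same per-gadget correspondence between MST configurations and truth values, using Claim~\ref{clm:splitting_gadget} to propagate a literal's value consistently through split gadgets and the clause gadget's forced one-double/two-single pattern to enforce the 1-in-3 condition. The only difference is that you explicitly address gluing the locally chosen junction graphs into a single U-turn-free closed walk (consistent traversal counts at both endpoints of each edge, connectivity, Euler-tour construction), a step the paper compresses into ``by design, this threading induces an MST at each junction''; that is a welcome elaboration rather than a different route.
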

    \begin{proof}
        Let's first construct a minimum-turn perfect threading of $G(\varphi)$ from a satisfying assignment of $\varphi$ as follows: For each variable $x$, thread its gadget according to Figure~\ref{fig:var_gadget}b, double-threading the edge simulating $x$ if $x = \true$ (left) and the edge simulating $\lnot x$ if $x = \false$ (right). For each literal $\ell$, thread its split gadget(s) according to Figure~\ref{fig:split_gadget}b so that the edges $a,b,c,d$ are either all double-threaded if $\ell = \true$ (right) or all single-threaded if $\ell = \false$ (left). For each clause, thread its gadget as shown in Figure~\ref{fig:clause_gadget}b so that the edge simulating its single true literal is double-threaded, while the other two edges are single-threaded.      
        By design, this threading induces an MST at each junction of $G(\varphi)$ so it is a minimum-turn perfect threading. 
    
        Now let's demonstrate the converse.  Suppose we have a minimum-turn threading of $G(\varphi)$. For each variable $x$, if the edge simulating $x$ in its gadget is double-threaded, then assign $x$ to \true, and \false, otherwise. We claim that this assignment satisfies $\varphi$. Consider a clause $\clause{\ell_{i, 1}, \ell_{i, 2}, \ell_{i,3}}$ in $\varphi$. Under a perfect threading, the gadget representing this clause must have an edge double-threaded and two single-threaded. Since these edges represent the literals $\ell_{i, 1}, \ell_{i, 2}, \ell_{i,3}$, this means that exactly one of these literals is assigned \true, thereby satisfying $\clause{\ell_{i, 1}, \ell_{i, 2}, \ell_{i,3}}$.
    \end{proof}

\begin{figure*}[h]
\centering
\includegraphics[width=\textwidth]{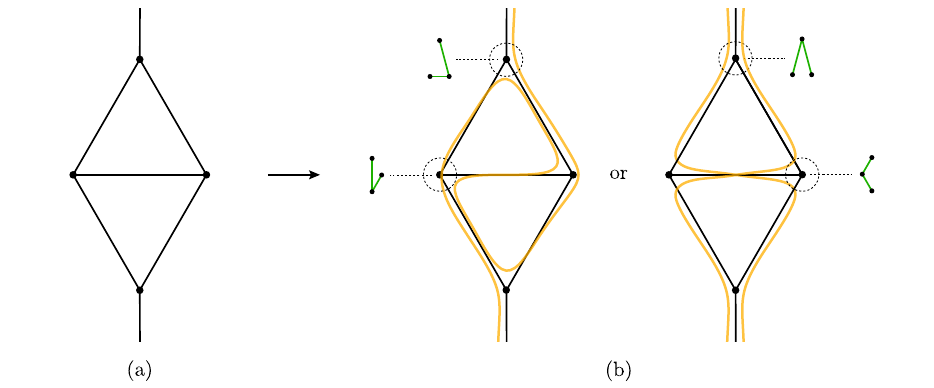}
\caption{\label{fig:demultigraphing} A de-multigraph gadget. All turns are of unit cost.}
\end{figure*}

\subparagraph*{De-Multigraph.} 
As shown in \autoref{fig:construction}a, if a literal does not appear in $\varphi$, it creates a multiedge with the edge simulating its corresponding variable. To address this issue and transform the multigraph $G(\varphi)$ into a simple graph $G'(\varphi)$, we modify Step~\ref{step:connection} of the construction by connecting the dangling edge simulating the literal to its copy using the ``de-multigraph gadget'' shown in \autoref{fig:demultigraphing}, rather than directly (\autoref{fig:construction}b). This gadget can be threaded in a way that induces an MST at each of its junctions, regardless of whether the incoming literal edges are threaded once or twice; see \autoref{fig:demultigraphing}b. Hence, we can adapt the proof of Lemma~\ref{lem:sat} and conclude that $G'(\varphi)$, with the specified turn costs, qualifies as an instance of \perfect~and has a minimum-turn perfect threading if and only if the \oneinthreesat instance $\varphi$ has a satisfying assignment. 
    
This result concludes our proof of Theorem~\ref{thm:hardboundeddegree}. As each edge in graph $G'(\varphi)$ is traversed no more than twice, our reduction further implies the hardness of \double. Hence: 

\begin{corollary}
\double~is NP-complete for graphs of maximum degree $4$. 
\end{corollary}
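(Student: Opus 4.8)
The plan is to recognize that the reduction already built for Theorem~\ref{thm:hardboundeddegree} doubles, essentially verbatim, as a reduction to \double. First I would dispatch membership in \NP: the decision version of \double admits the same polynomial-size witness as \decision, and the extra ``at most twice per edge'' requirement is checkable in $O(|T|)$ time by counting traversals, so the argument of Theorem~\ref{thm:np} carries over unchanged (indeed the restriction only shrinks the witness). For hardness, I would reduce from \oneinthreesat using the very same maximum-degree-$4$ graph $G'(\varphi)$ produced by Algorithm~\ref{alg:reduction} with the de-multigraph modification, and set the threshold $c^\star = \sum_{v \in V(G'(\varphi))} \mathrm{MST}(v)$, the total cost of a minimum spanning tree at every junction (with respect to the turn-cost weights on pairs of incident edges).

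The key leverage is the lower bound from Section~\ref{sec:preliminaries}: every threading $T$ satisfies $\alpha(T) = \sum_v \mathrm{cost}(J(v)) \ge \sum_v \mathrm{MST}(v) = c^\star$, with equality exactly when every junction graph is a minimum spanning tree, i.e.\ exactly when $T$ is a minimum-turn perfect threading. Since this bound holds for \emph{all} threadings, it holds in particular for double threadings. Hence a double threading of turn cost $\le c^\star$ exists if and only if a minimum-turn perfect threading that also happens to traverse each edge at most twice exists. It then suffices to verify two things: (ii) any double threading meeting $c^\star$ forces a satisfying assignment, which is immediate because equality $\alpha(T)=c^\star$ already forces $T$ to be a minimum-turn perfect threading, whereupon the converse direction of Lemma~\ref{lem:sat} applies directly; and (i) the perfect threading built from a satisfying assignment in the forward direction of Lemma~\ref{lem:sat} never traverses any edge more than twice.

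The one place requiring genuine inspection is (i), which I would check gadget by gadget. This holds because the variable gadget single-threads its top edge and double-threads exactly one bottom edge (\autoref{fig:var_gadget}b), the clause gadget double-threads one incident edge and single-threads the other two (\autoref{fig:clause_gadget}b), the split gadget threads each of $a,b,c,d,e,f,g,h$ at most twice by Claim~\ref{clm:splitting_gadget}, and the de-multigraph gadget is by construction threadable at most twice per edge regardless of whether its literal edges carry one or two strands (\autoref{fig:demultigraphing}b). Combining these observations, every edge of $G'(\varphi)$ is traversed at most twice under the prescribed threading, so that threading is itself a valid double threading of cost $c^\star$.

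Putting the pieces together, $G'(\varphi)$ admits a double threading of cost $\le c^\star$ if and only if it admits a minimum-turn perfect threading if and only if $\varphi$ is satisfiable; since the construction runs in $O(n+m)$ time and preserves maximum degree $4$, this establishes \NP-hardness, and together with membership in \NP completes the proof. The main obstacle is purely the bookkeeping in step (i): there is no new structural difficulty beyond confirming that none of the four gadget types is ever forced into a third traversal of an edge, a fact the gadget figures already bear out.
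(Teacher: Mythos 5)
Your proposal is correct and follows essentially the same route as the paper: the paper's entire argument for this corollary is the observation that every edge of $G'(\varphi)$ is traversed at most twice under the perfect threadings arising in the reduction of Theorem~\ref{thm:hardboundeddegree}, so the same construction and the same MST-sum threshold serve as a reduction to \double. Your additional bookkeeping (the explicit threshold $c^\star$, the per-gadget traversal check, and noting that equality forces every junction graph to be an MST, which is valid here since all turn costs in the construction are positive) simply makes explicit what the paper leaves implicit.
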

\section{Maximum-Degree-3 Graphs}
\label{sec:max-degree-3}

Here we present polynomial-time algorithms for solving \perfect and \double on graphs of maximum degree $3$, thereby fully characterizing the complexity of these two problems. 

\subsection{\perfect}

\begin{theorem}
    \perfect~in graphs of maximum degree $3$ is in \P.
\end{theorem}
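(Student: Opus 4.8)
The plan is to reduce \perfect on maximum-degree-$3$ graphs to a minimum-weight perfect matching problem, which is solvable in polynomial time by Edmonds' blossom algorithm.

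First I would pin down the local structure that a perfect threading forces on each vertex. The key identity is that the degree of a tube-vertex $e$ in the junction graph $J(v)$ equals the number of times the walk traverses $e$, since each traversal of $e$ contributes exactly one incidence at $v$. Because $J(v)$ must be a tree, at a degree-$2$ vertex $J(v)$ is a single edge, so both incident edges are single-threaded; at a degree-$3$ vertex $J(v)$ is a path on three tube-vertices, whose degree sequence $(1,2,1)$ forces exactly one incident edge to be double-threaded (the middle of the path) and the other two single-threaded. Since a degree-$2$ vertex admits no incident double-threaded edge, every double-threaded edge joins two degree-$3$ vertices, and every degree-$3$ vertex is incident to exactly one. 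Hence the set $M$ of double-threaded edges is a \emph{perfect matching} of the degree-$3$ vertices inside the subgraph $G_3$ consisting of edges between two degree-$3$ vertices.

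Next I would prove the converse: every perfect matching $M$ of the degree-$3$ vertices is realized by a perfect threading. Double each edge of $M$ to obtain a multigraph $G'$; it is connected (its underlying simple graph is $G$) and every vertex has even degree, so it is Eulerian. The crucial claim is that $G'$ admits a \emph{U-turn-free} Eulerian circuit, and that any such circuit is automatically a perfect threading realizing $M$: at a matched (now degree-$4$) vertex the traversal counts are $(1,1,2)$, and the only U-turn-free junction graph with this degree sequence is the path with the doubled edge in the middle---a tree. To establish existence of a U-turn-free Eulerian circuit I would argue by a merge operation on transition systems: choose any non-backtracking pairing of edge-ends at each vertex, decomposing $G'$ into closed walks; as long as more than one walk remains, connectivity of $G'$ forces two of them to meet at a degree-$4$ vertex, and there swapping the two parallel copies of the doubled edge merges the two walks without creating a U-turn and without altering any junction graph. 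Iterating reduces the number of walks to one. I expect this step to be the main obstacle, since it is where global connectivity---rather than the purely local matching constraints---must be reconciled with the no-U-turn rule.

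Finally, with perfect threadings identified with perfect matchings of $G_3$, the optimization becomes combinatorial. The total turn cost splits over vertices: degree-$2$ vertices contribute a fixed constant, while a degree-$3$ vertex $v$ with doubled edge $e_k$ and single edges $e_i,e_j$ contributes exactly $c_v(e_k)=\alpha(e_i,e_k)+\alpha(e_j,e_k)$ (the two path edges of $J(v)$), independent of the global routing. Assigning each edge $e=uv$ of $G_3$ the weight $w(e)=c_u(e)+c_v(e)$, a minimum-weight perfect matching of $G_3$ yields the minimum-cost perfect threading; if $G_3$ has no perfect matching, then $G$ has no perfect threading at all. Since the least possible contribution at $v$ is the minimum spanning tree value $\bigl(\sum \text{turns at } v\bigr)-\bigl(\max \text{turn at } v\bigr)$, attained by doubling the edge opposite a maximum-cost turn, the output is a minimum-turn perfect threading precisely when the matching meets this bound at every vertex, and otherwise no all-MST perfect threading exists. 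Every step---building $G_3$ and its weights, running the matching algorithm, and extracting the Eulerian circuit---runs in polynomial time, which gives the theorem.
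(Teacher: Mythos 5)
Your proof is correct and rests on the same core reduction as the paper's: at a degree-$3$ vertex a perfect threading forces the junction graph to be a path of length $2$, so the set of double-threaded edges is exactly a perfect matching of the degree-$3$ vertices (within the subgraph of edges joining two such vertices), and conversely every such matching is realizable. The two arguments diverge in how they handle minimality and realizability. The paper enforces the minimum-spanning-tree condition combinatorially: it deletes every edge that fails to be the degree-$2$ centre of some MST at both of its endpoints and then asks for \emph{any} perfect matching of what remains; you instead keep all candidate edges, assign edge $uv$ the weight $c_u(uv)+c_v(uv)$, run \emph{minimum-weight} perfect matching, and test the result against the sum of per-vertex MST lower bounds (which is equivalent, since the global minimum equals that sum exactly when every vertex attains its local minimum). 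Your version buys slightly more---it also returns the cheapest perfect threading when no all-MST one exists---at the cost of a weighted rather than unweighted matching computation. The second difference is that the paper delegates the passage from a consistent family of junction graphs to a single closed walk to prior work~\cite{demaine2024graph}, whereas you reprove it for this special case via the transition-swapping merge: two residual closed walks in the doubled multigraph must share a vertex, necessarily one of degree $4$, where exchanging the two parallel copies of the doubled edge merges them without creating a U-turn or altering any junction graph. That argument is sound, and making it explicit is a genuine (if minor) improvement, since it is the only point where global connectivity interacts with the no-U-turn constraint.
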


Let us consider an instance $(G, \alpha)$ of \perfect with graph $G$ of maximum degree $3$. We begin our proof of the theorem by observing that a spanning tree at a degree-$3$ junction $v \in V(G)$ is a path of length $2$. Hence, one of the edges incident to $v$ must be double-threaded while the remaining two edges are single-threaded. From this observation, we deduce that finding a perfect threading in $G$ is a matching problem: Selecting one incident edge per degree-$3$ vertex to double-thread is equivalent to finding a set of edges such that each degree-$3$ vertex is incident to one edge from this set. 

However, not every spanning tree qualifies as a \emph{minimum} spanning tree (MST). Hence, to achieve a \textit{minimum-turn} perfect threading, we must double-thread only those edges that could induce MSTs. That is, we seek matchings that select for edges $uv$ corresponding to degree-$2$ vertices in MSTs for both junction graphs $J(u)$ and $J(v)$. See our proposed algorithm below: 

\begin{algorithm}
    \caption{\perfect~on maximum-degree-3 graphs.}
    \label{alg:deg3perfectthreading}
    \begin{enumerate}
        \item 
        \label{step:one}
        Delete all degree-$2$ vertices in $G$ and their incident edges. 
        \item 
        \label{step:two}
        For every remaining vertex $v\in V$, consider all MSTs for $J(v)$ and delete any edge incident to $v$ of degree $1$ in all MSTs. 
        \item 
        \label{step:three}
        Compute a perfect matching of the remaining graph $G'$. Return the matching if it exists. 
    \end{enumerate}
    \end{algorithm}

\subparagraph*{Correctness.}
The correctness the algorithm follows from the claim below: 
    \begin{claim}
        Graph $G$ has a perfect threading if and only if the graph $G'$ produced by Algorithm~\ref{alg:deg3perfectthreading} has a perfect matching.
    \end{claim}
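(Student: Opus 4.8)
The plan is to prove both implications by relating the set of double-threaded edges of a (minimum-turn) perfect threading to a perfect matching of $G'$, using a purely local analysis at each vertex together with the Euler-tour construction of \cite{demaine2024graph} for the global assembly. The starting observation is that the number of times an edge $e$ is traversed equals the degree of its tube-vertex in each junction graph containing it, and that a perfect threading forces every $J(v)$ to be a spanning tree of its tube-vertices. First I would record the two local cases. At a degree-$2$ vertex the only spanning tree is a single edge, so both incident edges are single-threaded; hence no edge incident to a degree-$2$ vertex can ever be double-threaded, which is exactly what Step~\ref{step:one} exploits by deleting such edges. At a degree-$3$ vertex the spanning tree is a path of length two, forcing the traversal multiset $(t_e)_{e\ni v}$ to equal $(2,1,1)$: precisely one incident edge is double-threaded (the middle of the path), and the other two are single-threaded. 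Consequently the set $D$ of double-threaded edges meets every degree-$3$ vertex in exactly one edge and avoids every degree-$2$ vertex, so $D$ is a perfect matching of the graph remaining after Step~\ref{step:one}.

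For the forward direction I would begin with a perfect threading of $G$. Since any perfect threading can be replaced by a minimum-turn perfect threading of no larger cost, I may assume every $J(v)$ is an MST. Then the double-threaded edge at each degree-$3$ vertex $v$ is the degree-$2$ (middle) vertex of an MST of $J(v)$, hence it is \emph{not} degree-$1$ in all MSTs and so survives Step~\ref{step:two}. Combined with the previous paragraph, this shows $D\subseteq E(G')$ and that $D$ is a perfect matching of $G'$, as required.

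For the converse, given a perfect matching $M$ of $G'$ I would reconstruct a threading by setting $t_e=2$ for $e\in M$ and $t_e=1$ otherwise. At each degree-$3$ vertex $v$, the unique matched edge $e$ at $v$ survived Step~\ref{step:two}, so there is an MST of $J(v)$ in which $e$ is the middle; I take $J(v)$ to be that MST (the path with $e$ in the center). At each degree-$2$ vertex I take $J(v)$ to be its unique spanning tree, and all remaining (unmatched) edges are single-threaded, consistently at both endpoints. Every $J(v)$ is then a connected tree, and since $G$ is connected so is the union of the junction graphs; the U-turn-free Euler-tour construction of \cite{demaine2024graph} therefore produces a single closed walk realizing exactly these junction graphs. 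Because every $J(v)$ is a tree it is a perfect threading, and because every $J(v)$ is an MST it is in fact minimum-turn.

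The step I expect to be the main obstacle is the global realizability in the converse: confirming that the locally prescribed tree junction graphs assemble into one \emph{connected} closed walk with no U-turns, rather than a disjoint union of walks. I would handle this by noting that for perfect threadings the issue reduces to connectivity of the union of the junction graphs, which follows from $G$ being connected together with each $J(v)$ being connected, so the Euler-tour machinery of \cite{demaine2024graph} applies and the tree structure rules out U-turns. A secondary point to verify carefully is the two-sided MST condition of Step~\ref{step:two}: a matched edge must be the middle of an MST \emph{at each of its two endpoints simultaneously}, but since these are independent per-vertex choices, survival of $e$ in Step~\ref{step:two} guarantees a valid MST at both $u$ and $v$.
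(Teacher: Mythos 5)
Your argument follows the same route as the paper's: the forward direction identifies the set of double-threaded edges with a perfect matching of $G'$ via the local analysis at degree-$2$ and degree-$3$ vertices, and the converse builds MST junction graphs from the matching and assembles them with the Euler-tour machinery of \cite{demaine2024graph}. Your converse is in fact more detailed than the paper's, which does not spell out the global realizability step at all.

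The one step that does not hold as stated is the bridging claim in your forward direction: ``any perfect threading can be replaced by a minimum-turn perfect threading of no larger cost.'' This is false in general. An edge is double-threaded exactly when it is the middle vertex of the length-$2$ path at \emph{both} of its endpoints, so the spanning trees at different junctions cannot be re-optimized independently: on $K_4$, for instance, one can choose turn costs so that the unique MST-middle at $v_1$ is $v_1v_2$ while the unique MST-middle at $v_2$ is $v_2v_3$; these demands collide at $v_2$, so no minimum-turn perfect threading exists even though perfect threadings do (any perfect matching of $K_4$ yields one). That said, you have put your finger on a real looseness in the paper itself: its proof of the forward direction silently begins ``suppose $G$ has a \emph{minimum-turn} perfect threading,'' so the paper also only establishes the equivalence ``a minimum-turn perfect threading exists iff $G'$ has a perfect matching'' --- which is the statement actually needed for the correctness of Algorithm~\ref{alg:deg3perfectthreading}, since \perfect asks for a minimum-turn perfect threading. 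If you restate the claim with ``minimum-turn perfect threading'' and delete the bridging sentence, your proof is correct and coincides with the paper's.
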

    \begin{proof}
    Suppose $G$ has a minimum-turn perfect threading $T$. Consider the set $M$ of edges that are double-threaded by $T$. Since $T$ induces an MST at every junction, none of the edges in $M$ could be removed in Step~\ref{step:two} of Algorithm~\ref{alg:deg3perfectthreading}. Hence, every edge of $M$ exists in $G'$. Furthermore, since every vertex in $G'$ corresponds to a degree-$3$ vertex in $G$, it must have exactly one incident edge that is double-threaded by $T$, implying one incident edge in $M$. Therefore, $M$ is a perfect matching in $G'$.

   Conversely, given a perfect matching $M$ of $G'$, we can construct junction graphs that result in a minimum-turn perfect threading: For every degree-2 vertex, we take the trivial junction graph comprising a single edge. For every degree-$3$ vertex $v$ with its matching $u$ in $M$, we set $J(v)$ to be the path of length $2$, with the central vertex corresponding to the edge $uv$. This path forms an MST by the construction of $G'$ (Step 2). 
\end{proof}

\subparagraph*{Running-Time.}
Step~\ref{step:one} runs in $O(n)$ time. Step~\ref{step:two} also runs in $O(n)$ time given that each degree-$3$ vertex has at most three MSTs.  Finally, since $|G'| \in O(n)$, Step~\ref{step:three} runs in $O(n^2\log{n})$ time via the perfect matching algorithm by Galil, Micali, and Gabow~\cite{galil1986ev}. 

\subsection{\double}
\label{sec:double}

While perfect threadings may not exist for some graphs, double-threadings are always possible --- refer to our na\"ive algorithm in Section~\ref{sec:preliminaries} for details. Here we present an efficient solution to \double~in maximum-degree-$3$ graphs. 

\begin{theorem}
    \double~in graphs of maximum degree 3 can be solved in time $O(m^2\log m)$. 
\end{theorem}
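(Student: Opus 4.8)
The plan is to reduce \double on a maximum-degree-$3$ graph to a single maximum-weight perfect matching computation. The first step is to observe that at a vertex of degree $\le 3$ the junction graph is completely determined by how many times each incident edge is threaded. Writing $t_e \in \{1,2\}$ for the multiplicity of edge $e$, connectivity of the junction graph forces $\sum_{e\ni v} t_e$ to be even; hence at a degree-$2$ vertex the two incident edges share a common multiplicity (junction $=$ a single or double edge), while at a degree-$3$ vertex the multiplicities are either $(1,1,2)$ (junction $=$ path, the doubled edge in the middle) or $(2,2,2)$ (junction $=$ triangle). In particular the junction graph, and therefore its turn cost, is a function of the multiplicities alone. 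Letting $D=\{e:t_e=2\}$, the parity condition says exactly that $D$ has odd degree at every degree-$3$ vertex and even degree at every degree-$2$ vertex; that is, $D$ is a $T$-join with $T$ the set of degree-$3$ vertices. I also need the converse — that every such $D$ is realized by an actual threading (a \emph{single} closed walk) — which I expect to prove by a splicing argument: starting from the closed-walk decomposition induced by the forced transitions, two walks sharing a doubled tube can be merged by swapping its two copies, and connectivity of $G$ lets us merge down to one walk.

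Next I would eliminate the degree-$2$ vertices. Since multiplicity is constant along any maximal path of degree-$2$ vertices, each such path contracts to a super-edge $\epsilon$ carrying cost $c_s(\epsilon)$ (the sum of its internal turn costs) if single and $2c_s(\epsilon)$ if doubled; what remains is a cubic multigraph $G^*$ on the original degree-$3$ vertices. Writing $S_v$ for the sum of the three turn costs at such a vertex, the $(1,1,2)$ configuration costs $S_v$ minus the turn between the two \emph{single} edges, and the $(2,2,2)$ configuration costs $S_v$. Summing, the total equals a fixed constant $K=\sum_v S_v+\sum_\epsilon 2c_s(\epsilon)$ minus a ``savings'' term. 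The single edges form a vertex-disjoint union of cycles in $G^*$ (each vertex is either all-doubled or lies on one cycle using two of its three edges), and the savings is precisely the total cost of traversing these \emph{single-threaded cycles} in $G$ (turn costs at the cubic vertices plus the internal $c_s(\epsilon)$). Hence minimizing the threading cost is equivalent to choosing a family of vertex-disjoint cycles in $G^*$ that \emph{maximizes} that total.

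This maximization I would solve by reduction to maximum-weight perfect matching, following the standard gadget for turn-cost covering problems~\cite{arkin2005optimal,ellis2015dna}. For each vertex $v$ of $G^*$ create three ``ports,'' one per incident edge; join the two ports of an edge $\epsilon$ by an edge (taken into the matching iff $\epsilon$ is \emph{not} on a single-cycle), and inside each vertex add the three port-pairs as internal edges with weight equal to the corresponding turn cost (plus a half-share of each incident $c_s$). A perfect matching then matches either all three ports of $v$ externally (the all-doubled case, weight $0$) or one internal pair and one external port (a single-cycle passes through $v$, paying that pair's turn cost), and matching an edge's two ports externally at both ends enforces the ``on a cycle or not'' consistency. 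Thus perfect matchings of this $O(m)$-vertex, $O(m)$-edge graph correspond bijectively to feasible single-cycle families with weight equal to the savings, so a maximum-weight perfect matching yields the optimum; Galil–Micali–Gabow~\cite{galil1986ev} then runs in the claimed $O(m^2\log m)$ time. (The all-degree-$2$ case, where $G$ is a single cycle, is handled directly by single-threading it.)

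The step I expect to be the main obstacle is the realizability claim flagged in the first paragraph: that every feasible $D$ actually arises from a \emph{single} closed walk, since connectivity of the junction graphs only guarantees a decomposition into possibly several closed walks. The crux is to show the copy-swapping merges can always be carried out until one walk remains — intuitively because the parity constraint forces a doubled edge at every degree-$3$ vertex, so every single-threaded cycle is attached to the rest of the structure through a doubled tube along which two walks can be spliced. Making this merging argument fully rigorous, and checking the degenerate multi-edge and self-loop cases produced by contraction, is where I would spend the most care.
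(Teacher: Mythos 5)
Your proposal is correct and follows essentially the same route as the paper: it reduces \double on a maximum-degree-$3$ graph to choosing a family of vertex-disjoint single-threaded cycles maximizing the saved turn cost (the complement of the doubled edges), and solves that via a maximum-weight perfect matching gadget, with Galil--Micali--Gabow giving the $O(m^2 \log m)$ bound. The realizability step you flag as the main obstacle (that any such cycle family is achieved by a single closed walk) is exactly what the paper outsources to the cycle-cover-to-threading lemma of prior work \cite{demaine2024graph}, so it is a known true statement rather than a gap.
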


\begin{proofsketch}

Suppose $G$ is a graph of maximum degree $3$. Let $\mathcal{C}$ be a collection of vertex-disjoint simple cycles in $G$. Prior work~\cite[Lemma 9]{demaine2024graph} achieves a threading $T$ that single-threads every edge in $\mathcal{C}$ and double-threads every other edge. 
Because $G$ has maximum degree $3$, the turn cost of this threading is: 
\[
\alpha(T) = \underbrace{\sum_{(uv, vw) \in T(G)} \alpha(uv, vw)}_{\text{total cost of all turns in $G$}} - \sum_{C \in \mathcal{C}}\alpha(C). 
\]
Therefore, solving \double~on $G$ is equivalent to finding a collection of vertex-disjoint simple cycles in $G$ whose total turn cost is maximized. We give an algorithm to solve this problem; see Algorithm~\ref{alg:vertex-disjoint-cycles} below. 

\begin{algorithm}
\caption{Computing maximum-turn vertex-disjoint simple cycles.}
\label{alg:vertex-disjoint-cycles}
\begin{enumerate}
    \item
    \label{step:construction}
    Construct a weighted graph $G'$ from $(G, \alpha)$
    (Figure~\ref{fig:vertex_disjoint}) as follows: 
    \begin{enumerate}
        \item For each edge $uv$, create a zero-weight path of length $3$ comprising vertices $uv^{+}$, $uv^{-}$, $vu^{-}$, and $vu^{+}$. 
        \item 
        \label{step:turn}
        For each turn $(uv, vw)$ in $T(G)$, add an edge of weight $\alpha(uv, vw)$ connecting the vertices $vu^{+}$ and $vw^{+}$. 
    \end{enumerate}
    \item 
    \label{step:matching}
    Compute a maximum weight perfect matching $M$ in $G'$. 
    \item Return the set of edges $S = \set{uv \in E(G): (uv^{-}, vu^{-}) \in M}$.
\end{enumerate}
\end{algorithm}

\begin{figure*}[h]
\centering
\includegraphics[width=\textwidth]{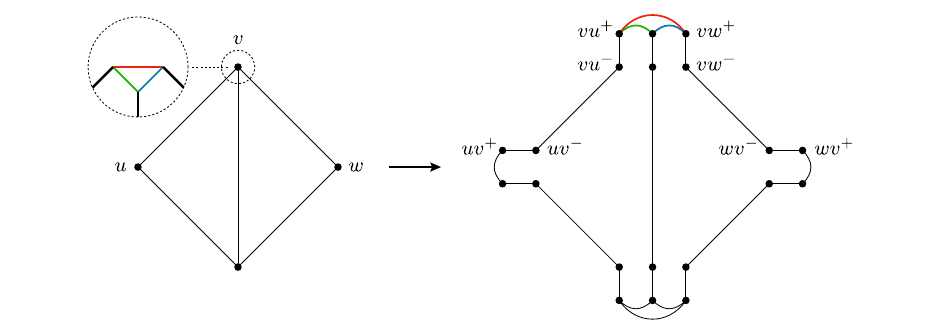}
\caption{\label{fig:vertex_disjoint} Construction of $G'$ from $G$ in Algorithm~\ref{alg:vertex-disjoint-cycles}.}
\end{figure*}

\subparagraph*{Correctness.} We briefly explain why the matching $M$ in $G'$ corresponds to a collection of vertex-disjoint simple cycles in $G$. Consider two possible cases for each vertex $v \in V(G)$: 

\begin{enumerate}
    \item Suppose $M$ contains a weighted edge $(vu^{+}, vw^{+})$ created in Step~\ref{step:turn}. Only one such edge is possible for $v$ given that the set of vertices $\{v{x}^{+}\}_{x\in N(v)}$ are pairwise connected and have size at most $3$. It follows that the remaining vertices $\{v{x}^{+}\}_{x\in N(v)\setminus\{u,w\}}$ are matched with their counterparts in $\{vx^{-}\}_{x\in N(v)\setminus\{u,w\}}$, and so vertices $vu^{-}$ and $vw^{-}$ must be saturated by edges $(vu^{-}, uv^{-})$ and $(vw^{-}, wv^{-})$, respectively. These edges correspond to some cycle traversing edges $uv$ and $vw$ through $u$ in $G$. 
    \item Otherwise, all vertices in $\{vx^{+}\}_{x\in N(v)}$ are matched to vertices in $\{v{x}^{-}\}_{x\in N(v)}$. This indicates that $v$ does not participate in any cycle.
\end{enumerate}

\subparagraph*{Running-Time.} Graph $G'$ has $O(m)$ vertices and edges. We can adapt the minimum-weight perfect matching algorithm by Galil, Micali, and Gabow~\cite{galil1986ev} for Step~\ref{step:matching}. This procedure runs in $O(|V(G')||E(G')| \log|V(G')|) = O(m^2 \log m)$ time, which dominates the $O(m)$ construction time of $G'$ in Step~\ref{step:construction}. Hence, the overall running time of Algorithm~\ref{alg:vertex-disjoint-cycles} is $O(m^2 \log m)$.

Finding a double-threading of $G$ from the collection of maximum-turn vertex disjoint simple cycles given by Algorithm~\ref{alg:vertex-disjoint-cycles} can be done in $O(m)$ time~\cite[Section 2.2]{demaine2024graph}. 
\end{proofsketch}

It remains an open question whether the general \threading problem can be solved efficiently in graphs with maximum degree $3$. Recall that \autoref{fig:lower-bound} depicts a graph whose minimum-turn threadings are far from perfect. 

\section{Other Special Cases}
\label{sec:special}
In this section, we give polynomial-time algorithms for several additional variants of \threading. 
We show that \exactlydouble is equivalent to the \tsp problem. This result paves the way for several insights: a fixed-parameter tractable algorithm contingent on the maximum degree of the graph, approximation algorithms, and a polynomial-time algorithm if the goal is simply to minimize the number of turns. Next, we give a polynomial-time algorithm to thread rectangular grids. Finally, we conclude our paper with a polynomial-time approximation for \threading~given bounded positive turn costs. 

\subsection{Exactly-Double Threading}
\label{sec:tsp}

Consider $n$ cities, with distance $d(i,j)$ between cities $i$ and $j$. The \tsp~problem (TSP) asks if there exists a tour of length at most some constant $C \geq 0$ that visits each city exactly once. 

\begin{theorem}
\label{thm:tsp}
    \exactlydouble~is equivalent to \tsp. 
\end{theorem}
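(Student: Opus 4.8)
The plan is to reformulate \exactlydouble as a purely local optimization at each vertex and then show these local problems are both independent and individually equal to TSP. First I would observe that, because each incident tube is traversed exactly twice, a vertex $v$ of degree $d(v)$ is visited exactly $d(v)$ times, so every tube-vertex of the junction graph $J(v)$ has degree exactly $2$. Since $J(v)$ must also be connected, it is forced to be a single cycle through all $d(v)$ incident tubes, i.e.\ a Hamiltonian cycle on them. Its contributed turn cost is $\sum \alpha(e,f)$ over the consecutive tube pairs of that cycle, which is precisely the length of a tour that visits the $d(v)$ tubes as ``cities'' under the symmetric distance $\alpha(\cdot,\cdot)$. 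Hence minimizing the turn cost at $v$ is exactly a TSP instance (on $d(v)$ cities), and for $d(v)=2$ this degenerates to the forced double edge, consistent with the two-city tour.

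The key step — and the one I expect to be the main obstacle — is arguing that these per-vertex choices can be made \emph{independently} and always reassemble into one valid threading, so that the global optimum is just the sum of the per-vertex TSP optima. I would prove this via a global-connectivity lemma. Form the union $U$ of all the junction graphs, and for each tube $e=uv$ link its two copies $e_u$ and $e_v$ by a multiplicity-two ``tube edge.'' Every vertex of $U$ then has even degree (two tube-edge incidences plus two turn incidences), and $U$ is connected whenever $G$ is connected and every $J(v)$ is connected: each connected $J(v)$ merges all of $v$'s tube-copies into a single component, and the tube edges propagate connectivity exactly as the edges of $G$ do. Therefore $U$ admits a single Euler tour, which is precisely one closed walk traversing each edge of $G$ twice with the prescribed turns and no U-turns. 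This decouples the global problem into independent vertex TSPs, giving the forward direction: an optimal exactly-double threading is obtained by solving TSP at each vertex separately and summing.

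For the reverse direction I would embed a single TSP instance into \exactlydouble, mirroring the \hamcycle reduction of Section~\ref{sec:general-hardness}. Given $n$ cities with distances $d(i,j)$, build a star-like graph with a central vertex $c$ carrying one ``arm'' per city, each arm a small constant-size gadget (for instance a triangle joined to $c$ by a bridge) whose internal edges are forced to be double-traversed at a fixed cost independent of the threading. Assign $\alpha(e_i,e_j)=d(i,j)$ to the bridge tubes meeting at $c$ and cost $0$ to all other turns. Any exactly-double threading makes $J(c)$ a Hamiltonian cycle on the $n$ bridge tubes, of cost equal to the corresponding TSP tour, while the arms contribute only an additive constant; hence the minimum turn cost equals the optimal tour plus a constant.

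Finally I would conclude equivalence: both reductions run in polynomial time and preserve cost up to an additive constant, so \exactlydouble and symmetric \tsp share the same complexity, which is exactly what is needed to transfer hardness in one direction and approximation and degree-parametrized exact algorithms in the other. I would also note that since the turn costs $\alpha$ need not obey the triangle inequality, the correspondence is with general (non-metric) symmetric TSP rather than its metric special case.
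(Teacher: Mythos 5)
Your proposal is correct and follows essentially the same route as the paper: the forward direction forces each junction graph to be a Hamiltonian cycle on the incident tubes and solves an independent TSP per vertex (the paper delegates the reassembly-into-one-walk step to the Euler-tour construction of \cite{demaine2024graph}, which you re-derive), and the reverse direction is the same arms-around-a-central-vertex construction with turn costs $d(i,j)$ at the center. The only cosmetic differences are your choice of $0$ rather than $1$ for the non-central turn costs and your explicit connectivity/even-degree argument for the union of junction graphs.
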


\begin{proofsketch}
First, we sketch a reduction from TSP to \exactlydouble. Given an instance of TSP, we construct an instance $(G, \alpha)$ of \exactlydouble: Create $n$ distinct arms incident to a central vertex $c$ as in the proof of Lemma~\ref{lem:ham-cycle}. Set the turn costs between any two arms $a_i$ and $a_j$ about vertex $c$ to $d(i,j)$. Set all other turn costs to $1$. It follows that there exists a tour of length at most $C$ that visits each city exactly once if and only if there exists a threading of $G$ of cost at most $C+4n$. 

Let's now consider the reverse direction. We reduce \exactlydouble~to TSP: Leveraging the key observation that each junction graph must be precisely a cycle, we find the minimum-cost cycle at each junction $v$ by solving TSP on $v$ --- we assign a city to each edge incident to $v$, with distance $\alpha(uv,vw)$ between cities $uv$ and $vw$. Subsequently, we apply the polynomial-time procedure by Demaine, Kirkpatrick, and Lin~\cite{demaine2024graph} to construct a threading, using these minimum-cost cycles as junction graphs. 
\end{proofsketch}

A fixed-parameter tractable algorithm for \exactlydouble~follows directly from the proof of Theorem~\ref{thm:tsp}: 

\begin{corollary}
\exactlydouble~can be solved in polynomial time if the maximum vertex degree $\Delta$ of the graph is in $O(\log n)$. 
\end{corollary}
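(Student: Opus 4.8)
The plan is to leverage the reduction from \exactlydouble to \tsp established in the proof of Theorem~\ref{thm:tsp}. Recall that this reduction solves \exactlydouble by computing, \emph{independently at each junction} $v$, a minimum-cost cycle through the edges incident to $v$, where the cost of traversing between two incident edges is the corresponding turn cost. This local cycle computation is exactly a TSP instance whose cities are the $d(v)$ edges incident to $v$. Crucially, these local TSP instances are solved separately and the resulting optimal junction cycles are stitched together into a global threading in polynomial time via the procedure of~\cite{demaine2024graph}. Therefore the only potentially super-polynomial component of the algorithm is the cost of solving a single TSP instance on $d(v)$ cities.

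Concretely, I would argue as follows. First, observe that a TSP instance on $k$ cities can be solved exactly in time $O(2^k \cdot \mathrm{poly}(k))$ by the standard Held--Karp dynamic programming algorithm. Applying this at a junction $v$ of degree $d(v)$ takes time $O(2^{d(v)} \cdot \mathrm{poly}(d(v)))$. Summing over all vertices, the total time to compute all optimal junction cycles is
\[
\sum_{v \in V(G)} O\!\left(2^{d(v)} \cdot \mathrm{poly}(d(v))\right)
\le n \cdot O\!\left(2^{\Delta} \cdot \mathrm{poly}(\Delta)\right),
\]
where $\Delta$ is the maximum degree. Under the hypothesis $\Delta \in O(\log n)$, we have $2^{\Delta} \in O(2^{c \log n}) = O(n^c)$ for some constant $c$, and $\mathrm{poly}(\Delta) = \mathrm{poly}(\log n)$, so the entire computation runs in polynomial time in $n$.

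Finally, I would note that combining these locally optimal cycles yields a globally optimal \exactlydouble solution: since every junction graph in an \exactlydouble solution must be \emph{exactly} a cycle (as observed in the proof of Theorem~\ref{thm:tsp}), and the turn cost of the threading decomposes as the sum of the turn costs of the junction cycles, minimizing each junction cycle independently minimizes the total. The stitching step of~\cite{demaine2024graph} then produces a valid threading realizing these cycles in $O(m)$ additional time. Chaining these bounds establishes the claimed polynomial running time.

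I do not anticipate a genuine obstacle here, as the result follows almost immediately from the structure of the Theorem~\ref{thm:tsp} reduction; the main point requiring care is simply verifying that the per-vertex exponential term $2^{\Delta}$ collapses to a polynomial under $\Delta \in O(\log n)$, and that the independence of the per-junction subproblems is what licenses treating the global problem as a disjoint union of small TSP instances rather than one large one.
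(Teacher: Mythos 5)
Your proposal is correct and follows essentially the same route as the paper: solve the per-junction TSP instances via Held--Karp (Bellman--Held) in time exponential in $d(v)$, observe this is polynomial when $\Delta \in O(\log n)$, and stitch the resulting junction cycles into a threading using the procedure of~\cite{demaine2024graph}. Your write-up simply spells out the summation over vertices and the optimality of combining locally optimal cycles, which the paper's one-line sketch leaves implicit.
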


\begin{proofsketch}
Solving TSP on each junction can be achieved in $O(d(v)^2 + 2^{d(v)})$ time via the Bellman-Held algorithm~\cite{bellman1962dynamic,held1962dynamic}, so the runtime is polynomial as long as $d(v) \leq \Delta$ is logarithmic in $n$. 
\end{proofsketch}

We can further inherit results from TSP: If the turn costs at each junction form a metric space, that is, if they obey the triangle inequality, then we can approximate \exactlydouble~in polynomial time by solving metric TSP approximately at each junction. For instance, if all turn costs are either $1$ or $2$, then we can utilize the approximation algorithm for (1, 2)-TSP by Adamaszek, Mnich, and Paluch~\cite{adamaszek2018new} to obtain an $8/7$-approximation scheme that runs in $O(\sum_{v\in V(G)}d(v)^3 + m)$ time. More generally, we can apply the Christofides-Serdyukov algorithm~\cite{christofides2022worst} to obtain a $3/2$-approximation scheme that achieves the same time complexity. 

\begin{corollary}
     \textsc{Exactly-Double Threading} is in \P~if turn costs are in $\{0, 1\}$. 
\end{corollary}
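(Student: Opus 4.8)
The plan is to peel the problem apart into independent per-junction subproblems using the equivalence we already established, and then solve each one in polynomial time for the special case $\alpha \in \{0,1\}$. By \autoref{thm:tsp}, an optimal \exactlydouble of $(G,\alpha)$ is obtained by independently computing, at each vertex $v$, a minimum-cost Hamiltonian cycle on the $d(v)$ ``cities'' corresponding to the tubes incident to $v$ (with intercity distances equal to the turn costs), and then stitching these cycles into a global threading via the construction of~\cite{demaine2024graph}. When $\alpha$ takes values in $\{0,1\}$, each junction subproblem is an instance of \tsp with distances in $\{0,1\}$ on the complete graph $K_{d(v)}$. Since the setup of the $n$ subproblems and the final stitching both run in polynomial time, it suffices to solve this $(0,1)$-variant of \tsp at a single junction in polynomial time.

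To solve the junction subproblem, I would recast it as \emph{maximizing} the number of zero-cost turns used. Let $H_0(v)$ denote the graph on the tube-cities whose edges are exactly the cost-$0$ turns. In any Hamiltonian cycle of $K_{d(v)}$, the zero-cost turns that appear form a spanning collection of vertex-disjoint paths in $H_0(v)$; if that collection consists of $t$ maximal paths, then the tour uses exactly $t$ unit-cost turns. Hence the junction optimum equals the minimum number of vertex-disjoint paths needed to cover $H_0(v)$ (with the degenerate case of a fully zero-cost Hamiltonian cycle giving cost $0$). The concrete steps are then: build $H_0(v)$, compute such a minimal path family, and close its paths cyclically, each closure contributing a single unit-cost turn, to realize a tour attaining this cost. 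Summing the per-junction optima and invoking the global stitching of~\cite{demaine2024graph} yields the overall threading and its cost.

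The main obstacle is the single-cycle (connectivity) requirement: the chosen zero-cost turns must extend to \emph{one} Hamiltonian cycle rather than to several disjoint cycles, and this is precisely what separates the junction subproblem from an easy $2$-factor or plain maximum-matching computation. I would attack it by encoding the selection of zero-cost path-edges as a matching-type problem local to $v$, in the spirit of the construction in \autoref{alg:vertex-disjoint-cycles}, and then arguing that the optimal matching always produces a path family that can be stitched into a single Hamiltonian cycle while paying exactly one unit-cost turn per path. The crux is verifying that this stitching never forces additional unit-cost turns and that the matching formulation is both integral and of polynomial size; once that is in hand, the remaining bookkeeping is routine.
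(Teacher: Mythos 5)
You reduce, as the paper does, to solving at each junction $v$ a \tsp instance on the complete graph $K_{d(v)}$ with distances in $\{0,1\}$, and you correctly observe that this junction optimum equals the minimum number of vertex-disjoint paths needed to cover the zero-cost graph $H_0(v)$ (or $0$ if $H_0(v)$ itself has a Hamiltonian cycle). The gap is precisely the step you defer to the end: computing that quantity in polynomial time. Minimum path cover of an arbitrary graph is \NP-hard --- its value is $1$ exactly when the graph has a Hamiltonian path --- and since any graph on $d(v)$ vertices can arise as $H_0(v)$, the junction subproblem inherits this hardness in full. Equivalently: a Hamiltonian cycle on $d(v)$ cities always uses exactly $d(v)$ edges, so adding $1$ to every distance turns $\{0,1\}$-\tsp into $(1,2)$-\tsp with the optimum shifted by the additive constant $d(v)$; these are the same optimization problem, and $(1,2)$-\tsp is \NP-hard (the paper itself invokes an $8/7$-approximation for it in the immediately preceding paragraph). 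Your proposed fix --- a matching-type encoding whose optimum ``always produces a path family that can be stitched into a single Hamiltonian cycle'' --- sits exactly where the hardness lives: a maximum matching in $H_0(v)$ does not determine a minimum path cover (already for $K_{1,3}$ the maximum matching has one edge while a maximum spanning linear forest has two), and the single-cycle constraint is not known to admit an integral polynomial-size matching formulation. So the ``crux'' you flag is not routine bookkeeping; it is the entire content of the claim, and the approach as described cannot be completed.

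For comparison, the paper's own proof is a one-sentence appeal to the assertion that \tsp on a complete graph with costs in $\{0,1\}$ is polynomial-time solvable, with no supporting argument. Your proposal faithfully follows the same route and, by attempting to fill in the details, exposes that this asserted fact is exactly what needs proving --- and that, read as a statement about arbitrary $\{0,1\}$ cost functions, it is false. Either the junction instances arising from \exactlydouble carry additional structure that neither you nor the paper identifies, or the corollary requires a genuinely different justification; in its present form your argument does not establish it.
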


\begin{proofsketch}
This result follows from the fact that we can solve TSP in a \emph{complete} graph $G$ with costs in $\{0,1\}$ in polynomial time. 
\end{proofsketch}

The hardness of \double~given turn costs in $\set{0,1}$ remains open.

\subsection{Rectangular Grid Graphs}
\label{sec:grid}

A \defn{rectangular grid graph} $G$ is a $w \times h$ lattice graph consisting of $w$ rows and $h$ columns of vertices, referred to as \defn{vertex rows} and \defn{vertex columns}, respectively. Each 90-degree turn at a vertex incurs a cost of $1$ while moving in a straight line incurs no cost. Below, we introduce a strategy for efficiently computing a threading $G$ with few turns. We first consider the following observations:

\begin{observation}
\label{obs:num_turns}
    Every vertex requires at least one turn to connect its junction graph.
\end{observation}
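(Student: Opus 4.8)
The plan is to show that in any threading of a rectangular grid graph, every vertex must contribute at least one turn, which follows directly from the structure of junction graphs combined with the grid's geometry. The key observation is that a turn at a vertex costs $1$ precisely when the walk changes direction (a 90-degree turn), while passing straight through costs $0$. So I need to argue that no vertex can have its junction graph connected using only straight-through moves.

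First I would examine what ``straight-through'' means at a grid vertex. At any vertex $v$ of the grid, the incident edges come in at most two collinear pairs: a horizontal pair (left/right) and a vertical pair (up/down). A zero-cost move through $v$ must traverse two collinear edges in sequence, i.e., it must go left-to-right (or right-to-left) or up-to-down (or down-to-up). The crucial point is that such a straight-through move only ever connects the two endpoints of a single collinear pair in the junction graph $J(v)$; it never links a horizontal edge to a vertical edge.

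Next I would invoke the definition of threading: the junction graph $J(v)$ must be connected and must span all edges incident to $v$. If $v$ has degree at least $2$ with edges from both the horizontal and vertical directions (for instance, any interior vertex, which has degree $4$, or a boundary vertex that is the corner of the grid's interior adjacency), then connecting a horizontal edge to a vertical edge in $J(v)$ requires at least one edge of $J(v)$ arising from a non-collinear (turning) pair. Hence at least one turn of nonzero cost is incurred at $v$. For a degree-$2$ vertex whose two edges are collinear, the junction graph on two vertices still needs one connecting edge; but if the walk only ever passes straight through, it can traverse the pair without turning, so I must handle this case carefully. The resolution is that even a straight-through passage produces an edge in $J(v)$ between the two collinear-edge vertices, and since a threading is a \emph{closed} walk visiting every edge, the degree-$2$ collinear vertex indeed needs its single junction-graph edge, which here costs $0$ --- so the observation as stated (``at least one turn'') must be interpreted as needing at least one junction-graph edge, or else the statement implicitly assumes every vertex has non-collinear incident edges. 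I would therefore state the argument for vertices incident to edges in both orientations and note that the grid's structure guarantees every vertex of interest has this property.

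The main obstacle will be pinning down the precise claim for the degenerate degree-$2$ collinear case, since such a vertex can be threaded with genuinely zero turning cost; I expect the intended reading is that a junction graph, being connected and nonempty on at least two vertices, requires at least one edge, and in the grid this edge corresponds to a turn (nonzero cost) exactly when the spanned edges are not all collinear. I would thus frame the proof around the connectivity requirement of $J(v)$: since $v$ has degree at least $2$, $J(v)$ has at least two vertices and must contain at least one edge to be connected, and I would argue that in a grid where $v$ has edges in two orientations this edge is necessarily a turning pair, giving cost at least $1$. The bulk of the proof is this short connectivity-plus-geometry argument; no heavy machinery is needed.
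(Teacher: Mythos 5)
Your proof is correct, but it takes a genuinely different route from the paper's. The paper counts traversals: connectivity of $J(v)$ forces at least $d(v)-1$ distinct passes through $v$, while the grid geometry supplies at most $\lfloor d(v)/2\rfloor$ straight (zero-cost) passes --- none at a corner, one at a degree-$3$ vertex, two at a degree-$4$ vertex --- so a case analysis over the three vertex types shows some pass must be a turn. You instead give a cut argument: every straight pass contributes a junction-graph edge only \emph{within} the horizontal pair or \emph{within} the vertical pair of incident edges, so connecting $J(v)$ across the horizontal/vertical bipartition requires an edge coming from a non-collinear pair, i.e.\ a turn. Your version is more uniform (no case split) and isolates the real reason the observation holds; the paper's version sets up the traversal-counting language reused in Observation~\ref{obs:even_turns}. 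One piece of your write-up is wasted effort: the ``degenerate degree-$2$ vertex with collinear edges'' you agonize over does not occur in a rectangular grid graph with $w,h\geq 2$ (the only degree-$2$ vertices are the four corners, whose edges are perpendicular), so every vertex has incident edges in both orientations and your bipartition argument applies without caveat. You should simply assert that fact rather than hedge about how the statement ``must be interpreted.''
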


\begin{proof}
The junction graph of vertex $v$ is connected only if the threading traverses $v$ uniquely at least $d(v)-1$ times. Let's consider the implication of this requirement on the three types of vertices in $G$: (1) A degree-$2$ vertex, being a corner, necessitates a turn. (2) A degree-$3$ vertex permits one straight traversal. If this traversal is taken, then any different traversal must be a turn. (3) A degree-$4$ vertex permits two separate straight traversals, but an additional distinct traversal is required, thereby requiring a turn.
\end{proof}

\begin{observation}
\label{obs:even_turns}
Each vertex column contains at least $2\lceil h/2 \rceil$ turns, and each vertex row contains at least $2\lceil w/2 \rceil$ turns. 
\end{observation}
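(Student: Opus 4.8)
The plan is to combine two facts about the turns occurring at the vertices of a single vertex line (a row or a column): a per-vertex lower bound and a global parity constraint. Observation~\ref{obs:num_turns} already supplies the lower bound: each vertex on the line contributes at least one turn, so a line with $k$ vertices has at least $k$ turns. The remaining work is to upgrade $k$ to the next even integer $2\lceil k/2 \rceil$, and for this I would show that the \emph{total} number of turns along any vertex line is even.

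To prove evenness, I would first establish a parity statement at each individual vertex and then sum it along the line. Fix a vertex $v$ on a vertex row and consider how the closed walk passes through $v$. Every pass uses exactly two of the edge-slots at $v$ and falls into one of three types: straight-horizontal (the two horizontal slots), straight-vertical (the two vertical slots), or a turn (one horizontal and one vertical slot). Counting the number of times the walk uses a \emph{vertical} edge-slot at $v$, a straight-vertical pass contributes $2$, a turn contributes $1$, and a straight-horizontal pass contributes $0$. Hence, modulo $2$, the number of turns at $v$ equals the number of vertical-edge traversals incident to $v$. The no-U-turn condition guarantees that the two slots of each pass are distinct, so this three-way case analysis is exhaustive.

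Summing this congruence over all vertices of the row, the total number of turns along the row is congruent modulo $2$ to the total number of traversals of the vertical edges incident to the row. Those vertical edges are precisely the edges of the two horizontal cuts bounding the row (the cut to the row above and the cut to the row below, with only one cut present for a boundary row). Since a threading is a \emph{closed} walk, it switches between the two sides of any cut an even number of times, so it traverses the edges of each cut an even number of times in total. Therefore the vertical-edge count incident to the row is even, and so is the number of turns along the row. Exchanging the roles of horizontal and vertical shows that every vertex column likewise has an even number of turns.

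Combining evenness with the per-vertex bound of Observation~\ref{obs:num_turns} finishes the proof: a vertex line whose turn count is even and is at least one per vertex must have at least $2\lceil k/2 \rceil$ turns, where $k$ is the number of vertices on the line, and applying this to each vertex row and each vertex column gives the two bounds stated in Observation~\ref{obs:even_turns}. I expect the main obstacle to be the parity bookkeeping rather than any deep idea: one must carefully justify the per-vertex congruence (including the exhaustiveness of the three pass-types under the no-U-turn rule) and correctly handle the boundary rows and columns, where only a single cut is available but its crossing count is still even.
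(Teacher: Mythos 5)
Your proof is correct, and it reaches the evenness claim by a different route than the paper. The paper argues per \emph{visit}: each maximal excursion of the walk into a column enters and exits via horizontal edges, and a short case analysis (straight-through, staircase, or doubling back) shows every such excursion contributes $0$ or $2$ turns, hence the column total is even. You instead argue per \emph{vertex}: the number of turns at $v$ is congruent mod $2$ to the number of vertical-slot usages at $v$, and summing along a row reduces evenness to the fact that a closed walk crosses each of the (at most two) horizontal edge cuts bounding the row an even number of times. Both are sound; the paper's version is shorter but leans on the geometric fact that an excursion inside a grid column is forced to be a monotone vertical path (so the three listed shapes really are exhaustive), whereas your cut-parity argument does not need any structural control over how the walk wanders within the line and would survive in settings where excursions could be more complicated. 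The final step --- combining evenness with the one-turn-per-vertex bound of Observation~\ref{obs:num_turns} to get $2\lceil k/2\rceil$ --- matches the paper exactly.
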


\begin{proof}
Our subsequent argument for vertex columns applies symmetrically to vertex rows. By Observation~\ref{obs:num_turns}, it suffices to argue that an even number of turns is required for each vertex column. When a thread enters a vertex column, it must exit the column either by returning the way it entered, inducing two turns, or leaving in the opposite direction, which results in either no turns if going straight or two turns if forming a staircase.
\end{proof}

It follows that: 

\begin{lemma}
\label{lemma:grid_lb}
A threading of $G$ with $w$ vertex columns and $h$ vertex rows has at least $2\lceil w/2 \rceil  \cdot 2\lceil h/2 \rceil$ turns if one of $w$ or $h$ is even, or both. 
\end{lemma}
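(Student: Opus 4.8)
The plan is to combine Observation~\ref{obs:even_turns} with a one-line counting argument, summing the per-column turn bound over all columns when $w$ is even, and the per-row bound over all rows when $h$ is even. The structural fact that makes this work is that each turn of a threading is a turning event located at a single vertex of $G$ (the vertex at which the walk changes direction), and every vertex belongs to exactly one vertex column and exactly one vertex row. Hence the turns of a threading are partitioned by the vertex columns, and independently by the vertex rows; summing a per-column (resp.\ per-row) lower bound over all columns (resp.\ rows) therefore counts each turn exactly once and never over-counts.

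Concretely, suppose first that $w$ is even. By Observation~\ref{obs:even_turns} each of the $w$ vertex columns contains at least $2\lceil h/2\rceil$ turns, and since these column-counts partition the total number of turns, the threading has at least $w \cdot 2\lceil h/2\rceil$ turns. As $w$ is even, $w = 2\lceil w/2\rceil$, so this quantity equals $2\lceil w/2\rceil \cdot 2\lceil h/2\rceil$, which is the claimed bound. If instead $h$ is even, the symmetric argument sums the per-row bound $2\lceil w/2\rceil$ over the $h$ vertex rows to obtain $h \cdot 2\lceil w/2\rceil = 2\lceil h/2\rceil \cdot 2\lceil w/2\rceil$. When both $w$ and $h$ are even, either direction applies, so the bound holds in every case permitted by the hypothesis.

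I anticipate no genuine difficulty: the content is carried entirely by Observation~\ref{obs:even_turns}, and what remains is bookkeeping. The only subtle point is the role of the evenness hypothesis, which is exactly what converts the number of summands ($w$ or $h$) into the factor $2\lceil w/2\rceil$ or $2\lceil h/2\rceil$ appearing in the target. If both $w$ and $h$ were odd, summing over columns would give only $w \cdot 2\lceil h/2\rceil = w(h+1)$, strictly less than the claimed $(w+1)(h+1) = 2\lceil w/2\rceil \cdot 2\lceil h/2\rceil$, so the naive summation would fall short — precisely why the statement restricts to at least one even dimension. I would also confirm that Observation~\ref{obs:even_turns} bounds the turns located \emph{at} the vertices of a column (consistent with its enter/exit-based proof), so that using the column (or row) partition to pass from a per-column bound to a global one is legitimate.
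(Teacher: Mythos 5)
Your proof is correct and follows essentially the same route as the paper: sum the per-line lower bound of Observation~\ref{obs:even_turns} over the lines in whichever direction has an even count, using evenness to rewrite that count as $2\lceil \cdot/2\rceil$. The only cosmetic difference is that the paper dispatches the both-even case directly from Observation~\ref{obs:num_turns} (one turn per vertex over all $wh$ vertices) rather than folding it into the same summation, and your explicit remark that turns are localized at vertices and hence partitioned by columns/rows is a welcome clarification the paper leaves implicit.
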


\begin{proof}
If both $w$ and $h$ are even, then the threading has at least $w \cdot h = 2\lceil w/2 \rceil  \cdot 2\lceil h/2 \rceil$ turns by Observation~\ref{obs:num_turns}, as desired. If $w$ is odd and $h$ is even, then each vertex row necessitates at least $w+1$ turns by Observation~\ref{obs:num_turns} and \ref{obs:even_turns}, so the threading once again has a total of at least $(w+1)\cdot h = 2\lceil w/2 \rceil  \cdot 2\lceil h/2 \rceil$ turns. The same analysis applies to even $w$ and odd $h$. 
\end{proof}

\begin{theorem}
\label{theorem:grid_graph}
In $O(m)$ time, we can compute a threading of $G$ with $2\lceil w/2 \rceil \cdot 2\lceil h/2 \rceil$ if at least one of $w$ and $h$ are even, which is optimal by Lemma~\ref{lemma:grid_lb}, or $2\lceil w/2 \rceil \cdot 2\lceil h/2 \rceil-4$ turns if both $w$ and $h$ are odd. 
\end{theorem}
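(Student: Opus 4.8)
The plan is to build the threading by choosing, at every vertex $v$, a connected junction graph $J(v)$ of minimum turn cost whose implied edge-traversal multiplicities agree at the two endpoints of each edge, and then to assemble these local choices into one closed, U-turn-free walk using the junction-graph routine of \cite[Section 2.2]{demaine2024graph} (the same tool behind the na\"ive double-threading). I would first record the cheapest local patterns. By \autoref{obs:num_turns} every vertex needs at least one turn. A degree-$2$ corner has exactly one forced turn, traversing each of its two edges once. A degree-$4$ interior vertex attains a single turn by combining its two straight passes with one turn; this doubles exactly one perpendicular pair of incident edges and single-threads the other two. A degree-$3$ boundary vertex attains one turn precisely when one of its two collinear (along-boundary) edges is doubled and its interior stub is single-threaded; a short parity count (the total number of edge-traversals at a vertex is even) shows that any other choice forces a second turn, with all three edges doubled.

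When both $w$ and $h$ are even, I would exhibit a pattern in which every vertex uses exactly one turn, i.e.\ a perfect threading whose junction graph at each interior vertex is the length-$3$ path through both straight passes. Global consistency of the doubled-edge set then reduces, on each side of the grid, to pairing up that side's non-corner vertices along the boundary while leaving both edges at each corner single. Since each dimension contributes sides with an even number of non-corner vertices ($w-2$ and $h-2$), these pairings exist, and a compatible interior assignment completes the pattern. This gives $wh = 2\lceil w/2\rceil\cdot 2\lceil h/2\rceil$ turns, which is optimal by \autoref{lemma:grid_lb}.

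For an odd dimension the corresponding sides have an odd number of non-corner vertices and cannot be paired, forcing extra turns. I would absorb these by threading the grid as though each odd side were padded up to the even value $2\lceil\cdot/2\rceil$, inserting one ``correction line'' per odd dimension. With exactly one odd dimension this adds a single line carrying one extra turn at each of its crossings, yielding $2\lceil w/2\rceil\cdot 2\lceil h/2\rceil$ turns, which again matches \autoref{lemma:grid_lb} (via \autoref{obs:even_turns}) and is optimal. With both dimensions odd, the padded pattern uses $2\lceil w/2\rceil\cdot 2\lceil h/2\rceil=(w+1)(h+1)$ turns; near each of the four corners, where a horizontal and a vertical correction interact, I would reroute locally to merge two forced turns into one, saving one turn per corner and achieving $(w+1)(h+1)-4$ turns.

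In every case I must verify that the chosen doubled-edge set induces a \emph{connected} multigraph, so that the assembly produces a single closed walk covering every edge, and that each $J(v)$ is connected and realizes its claimed turn count with no repeated pass, so that the routine of \cite{demaine2024graph} yields a genuine U-turn-free threading; writing down the pattern and running the assembly is linear, giving the $O(m)$ bound. The main obstacle is the boundary analysis and, especially, the corner analysis in the both-odd case: I must show that the local reroutings simultaneously keep every junction graph connected, keep all edge multiplicities consistent across the thin boundary layer where the correction lines sit, and realize exactly the four claimed savings, since the stated count is tight to an additive constant and leaves no slack for error.
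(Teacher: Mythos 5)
Your overall strategy is genuinely different from the paper's. The paper constructs an explicit \emph{edge cycle cover} of the grid --- a boundary cycle, cycles along alternating rows and columns, unit-square cycles, and a jagged boundary traversal when a dimension is odd --- and then invokes Lemma~\ref{lemma:cycle_to_threading}: any U-turn-free cycle cover that traverses each vertex $v$ uniquely at least $d(v)-1$ times yields a threading of the same turn cost. That route makes consistency of edge multiplicities and the absence of U-turns automatic (each cycle contributes equally at both endpoints of every edge it uses), leaving only edge coverage, junction connectivity, and the turn count to check. You instead work locally: you classify the one-turn junction graphs at each vertex type, recast the problem as choosing a globally consistent set of doubled edges, and assemble with the routine of \cite{demaine2024graph}. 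This is a legitimate alternative, and your treatment of the both-even case is essentially complete: the corner/degree-$3$/degree-$4$ patterns are correctly identified, the side-by-side pairing argument works because $w-2$ and $h-2$ are even, and a tiling of the interior by $2\times 2$ blocks supplies the ``compatible interior assignment'' you allude to.

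The gap is in the two cases that carry the theorem's real content. For one odd dimension (where the even dimension's worth of extra turns beyond one per vertex must appear) and for both dimensions odd (where the count is $(w+1)(h+1)-4$), you name the devices --- ``correction lines'' and ``local reroutings near the corners'' --- and check only that the arithmetic could come out right; you never exhibit the patterns or discharge the obligations you yourself list at the end. In particular: by parity of the junction-graph degree sum, a degree-$4$ vertex must have an even number of doubled incident edges, so a correction line cannot simply ``add one doubled edge'' as it passes through a vertex --- it must double a full collinear pair, which costs exactly two turns there; this needs to be stated and propagated consistently to where the line meets the boundary. Likewise, the four corner reroutings in the both-odd case must each be shown to save exactly one turn while keeping every nearby junction graph connected and every edge multiplicity matched at both endpoints. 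Since the claimed bounds are exact and (for the even cases) tight against Lemma~\ref{lemma:grid_lb}, these verifications \emph{are} the proof; as written, your argument establishes the theorem essentially only when both $w$ and $h$ are even.
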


To prove Theorem~\ref{theorem:grid_graph}, we leverage the following lemma: 

\begin{lemma}
\label{lemma:cycle_to_threading}
Given an edge cycle cover $\mathcal{C}$ of $G$, where each cycle contains no U-turns and collectively traverses each $v \in V(G)$ uniquely at least $d(v)-1$ times, we can construct a threading of $G$ with cost $\sum_{c \in \mathcal{C}} \alpha(C)$. 
\end{lemma}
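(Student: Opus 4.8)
The plan is to read the cycle cover $\mathcal{C}$ as a prescription of junction graphs and then invoke the threading-assembly routine of \cite{demaine2024graph} to stitch the individual cycles into a single closed walk without changing any turn. Concretely, for each vertex $v$ I would let $J(v)$ be the multigraph on the tubes incident to $v$ that contains one edge, namely the turn $(uv,vw)$, for every time some cycle $C\in\mathcal{C}$ passes through $v$ consecutively along tubes $uv$ and $vw$. Since each cycle is U-turn-free, no such turn is a self-loop, so every $J(v)$ is a legitimate junction graph. Counting turn costs edge-by-edge already gives $\sum_{v\in V(G)}\alpha(J(v))=\sum_{C\in\mathcal{C}}\alpha(C)$, so the entire task reduces to realizing \emph{exactly} these junction graphs by one threading: its turn cost will then be the claimed $\sum_{C\in\mathcal{C}}\alpha(C)$.

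Next I would verify that the hypotheses make these junction graphs simultaneously realizable and connected. Because $\mathcal{C}$ is an edge cycle cover, every edge $uv$ is traversed at least once, and each traversal contributes one half-edge at $u$ and one at $v$; hence the number of traversals of $uv$ equals the degree of tube $uv$ in $J(u)$ and of tube $vu$ in $J(v)$, so the prescribed edge multiplicities are consistent at the two endpoints (in particular the total degree at each $v$ is even). The requirement that $\mathcal{C}$ traverses each $v$ uniquely at least $d(v)-1$ times says precisely that the distinct turns at $v$ span all $d(v)$ tubes in a connected fashion, i.e.\ that each $J(v)$ is connected.

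With consistent, connected junction graphs in hand I would apply the Euler-tour construction of \cite{demaine2024graph} (the same routine used for the na\"ive double-threading), which produces a closed walk that follows exactly the prescribed turns and contains no U-turns. This walk visits every edge because $\mathcal{C}$ covers every edge, and it induces precisely the connected junction graphs $J(v)$; therefore it is a valid threading in the sense of \autoref{def:threading}, and by the bookkeeping above its turn cost equals $\sum_{C\in\mathcal{C}}\alpha(C)$, as required.

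The step I expect to be the crux is arguing that the assembled object is a \emph{single} closed walk rather than $|\mathcal{C}|$ separate ones. Following the turns with their original half-edge assignments merely retraces the cycles separately; to fuse them I must exploit the freedom in \emph{which} traversal of a multiply-used tube is matched to which turn, since such a re-matching leaves the set of turns — and hence the total cost — untouched. Whenever two current sub-walks meet at a vertex $v$, connectivity of $J(v)$ yields a path in $J(v)$ between a turn of one sub-walk and a turn of the other, so along it some two turns in \emph{different} sub-walks share a tube; swapping their half-edge assignments splices the sub-walks into one while preserving every junction graph. I would conclude, by an induction driven by the connectivity of each $J(v)$ together with the connectivity of $G$, that these turn-preserving swaps collapse everything into one tour — equivalently, that the detachment of $G$ along the junction graphs is connected. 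This is exactly the guarantee furnished by the $O(m)$-time routine of \cite{demaine2024graph}, which I would cite, emphasizing that it keeps the turn set fixed and therefore the cost equal to $\sum_{C\in\mathcal{C}}\alpha(C)$.
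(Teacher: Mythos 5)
Your proposal matches the paper's own (much terser) proof sketch: both read $\mathcal{C}$ as inducing the junction graphs, use the edge-cover and $d(v)-1$-distinct-traversals hypotheses to get coverage and connectivity, and then invoke the assembly routine of \cite{demaine2024graph} to fuse the cycles into one closed walk of identical turn cost. Your extra discussion of the turn-preserving splicing at shared tubes is a correct and welcome elaboration of what that cited routine actually does, but it is not a different approach.
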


\begin{proofsketch}
The cycles in $\mathcal{C}$ traverse each edge of $G$ at least once, as required by the definition of an edge cycle cover, and since they uniquely traverse each junction at least $d(v)-1$ times without U-turns, they connect all junction graphs. Therefore, we can construct a threading of equal turn cost as $\mathcal{C}$ from the junction graphs induced by $\mathcal{C}$. 
\end{proofsketch}

We present Algorithm~\ref{alg:cycle_cover} for threading $G$ by finding such an edge cycle cover of $G$. 

\begin{algorithm}[H]
\caption{Threading a rectangular grid graph}
\label{alg:cycle_cover}
\begin{enumerate}
    \item Construct an edge cycle cover $\mathcal{C}$ of $G$ as follows:
    \begin{enumerate}[i.]
        \item 
        \label{step:case}
        If $w$ and $h$ are both even (Figure~\ref{fig:rectangle-graphs}a), then: 
        \begin{enumerate}[A.]
            \item 
            Add a cycle covering the boundary of $G$. 
            \item 
            Add a cycle covering each even row and each even column of $G$. 
            \item
            Add a cycle covering each square $(i, j)$ in $G$ for all pairs of even $i, j$. 
        \end{enumerate}
        \item If $w$ is even and $h$ is odd (Figure~\ref{fig:rectangle-graphs}b): 
        \begin{enumerate}[A.]
            \item Perform steps A and B from (i), excluding the cycle covering the last row of $G$. 
            \item Add a cycle covering the left, top, and right sides of $G$, with a jagged traversal of the bottom boundary that dips the even squares. 
        \end{enumerate}
        \item If both $w$ and $h$ are odd (Figure~\ref{fig:rectangle-graphs}c): 
        \begin{enumerate}[A.]
            \item Add a cycle covering each odd row and each odd column of $G$. 
            \item Add a cycle covering each square $(i, j)$ in $G$ for for all pairs of even $i, j$.
        \end{enumerate}
    \end{enumerate}
    \item Compute the junction graphs induced by $\mathcal{C}$.
    \item Compute a threading from the junction graphs (Section 2.2~\cite{demaine2024graph}). 
\end{enumerate}
\end{algorithm}

\begin{figure*}[h]
\centering
\includegraphics[width=\textwidth]{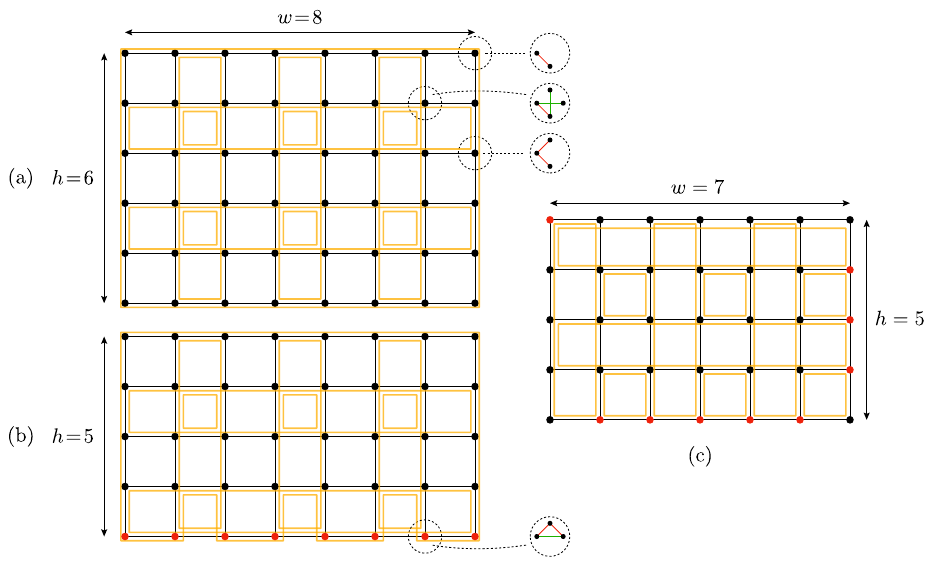}
\caption{\label{fig:rectangle-graphs} Illustration of the cycles created by Algorithm~\ref{alg:cycle_cover}. Vertices highlighted in red are traversed by two turns rather than one.}
\end{figure*} 

\subparagraph{Correctness.} It is straightforward to verify that the edge cycle cover $\mathcal{C}$ constructed by Algorithm~\ref{alg:cycle_cover} indeed covers each edge of $G$ and uniquely traverses each vertex $v$ at least $d(v)-1$ times. We demonstrate this verification for the first case (i) where both $w$ and $h$ are even. For edges: Every edge along the boundary of $G$ is covered by the cycle from Step A, and every interior vertical and horizontal edge is covered by cycles from steps B and C, respectively. For vertices: Every degree-2 junction is connected via a cycle from Step A (turn), every degree-3 vertex is connected via two cycles --- one from Step B (straight) and the other from Step C (turn), and every degree-4 vertex is connected via three cycles --- two from Step B (straight) and one from Step C (turn). By Lemma~\ref{lemma:cycle_to_threading}, we achieve a threading of $G$ with the same number of turns as $\mathcal{C}$, which in this case amounts to $2\lceil w/2 \rceil \cdot 2\lceil h/2 \rceil$. Each step of the algorithm operates in $O(m)$ time, thus completing our proof of Theorem~\ref{theorem:grid_graph}.

\subsection{Approximation for Bounded Turn Costs}
\label{sec:approximation}

\begin{theorem}
Given that all turn costs are positive, we can attain a $2r$-approximation for \threading, with $r$ representing the ratio between the maximum and minimum turn costs. This approximation can be computed in $O(m)$ time.
\end{theorem}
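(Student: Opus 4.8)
The plan is to show that the naïve double-threading described in Section~\ref{sec:preliminaries} is \emph{already} a $2r$-approximation, so no optimization beyond its $O(m)$ construction is needed. I would argue by comparing the algorithm's cost against the optimum \emph{vertex by vertex}. Write $\alpha_{\max}$ and $\alpha_{\min}$ for the largest and smallest turn costs in $T(G)$, so that $r = \alpha_{\max}/\alpha_{\min}$; since all turn costs are positive we have $\alpha_{\min} > 0$ and $r$ is finite and well-defined. For each vertex $v$, let $C(v)$ denote the cost of the cycle that the naïve threading assigns to the junction at $v$, and let $\mathrm{MST}(v)$ denote the minimum spanning tree cost of the turn graph at $v$ (whose nodes are the $d(v)$ incident edges, with weights given by the turn costs).

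First I would record the lower bound already established in Section~\ref{sec:preliminaries}: every junction graph induced by a threading is connected on its $d(v)$ incident-edge nodes, so its total turn cost is at least $\mathrm{MST}(v)$, and therefore any optimal threading $\opt$ satisfies $\alpha(\opt) \geq \sum_{v \in V(G)} \mathrm{MST}(v)$. On the algorithm side, the naïve threading double-threads every edge and pays exactly $\sum_{v \in V(G)} C(v)$, where each $C(v)$ is the cost of an (arbitrary) cycle on the $d(v)$ incident edges.

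The heart of the argument is a single per-vertex inequality. A cycle on $d(v)$ nodes uses exactly $d(v)$ turns, so $C(v) \leq d(v)\,\alpha_{\max}$, while any spanning tree uses $d(v)-1$ turns, so $\mathrm{MST}(v) \geq (d(v)-1)\,\alpha_{\min}$. Dividing and using the standing assumption that $G$ has minimum degree at least $2$, I obtain
\[
\frac{C(v)}{\mathrm{MST}(v)} \;\leq\; \frac{d(v)}{d(v)-1}\cdot \frac{\alpha_{\max}}{\alpha_{\min}} \;\leq\; 2r,
\]
where the last step uses that $d(v)/(d(v)-1)$ is maximized over integers $d(v)\geq 2$ at $d(v)=2$, contributing the factor $2$. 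Summing $C(v) \leq 2r\,\mathrm{MST}(v)$ over all vertices and combining with the two bounds above gives $\sum_v C(v) \leq 2r\sum_v \mathrm{MST}(v) \leq 2r\,\alpha(\opt)$, the claimed guarantee.

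I do not expect a serious obstacle: the only delicate point is justifying the \emph{uniform} factor $2$ in the per-vertex ratio, which is precisely where minimum degree $2$ is used (a degree-$2$ junction is the worst case, since the cycle charges its single turn twice against an MST consisting of that one turn). Note also that using an arbitrary cycle rather than a minimum-cost one is harmless, because the bound $C(v)\leq d(v)\,\alpha_{\max}$ holds for \emph{any} cycle; this is what keeps the construction at $O(m)$ and avoids the intractable minimum-cycle (TSP) computation at each junction. Since the naïve threading is built in $O(m)$ time and we perform no further work, the total running time is $O(m)$, completing the proof.
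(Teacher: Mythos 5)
Your proof is correct and takes essentially the same approach as the paper: both bound the na\"ive double-threading's per-junction cycle cost by $d(v)\,\alpha_{\max}$ against the per-junction spanning-tree lower bound of $(d(v)-1)\,\alpha_{\min}$, using minimum degree $2$ to get the factor $2$. The only cosmetic difference is that you take the ratio vertex by vertex while the paper sums numerator and denominator first and uses $2m-n\geq m$; both hinge on the same facts.
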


\begin{proof}
Apply the na\"ive double-threading algorithm described in Section~\ref{sec:preliminaries} to obtain a threading $T$. We bound $\alpha(T)$ as follows: Normalize all turn costs such that the smallest is $1$. Then the cost of $T$ at each junction $v$ must be at most $rd(v)$ (i.e., the maximum cost of a cycle at $v$), and the cost of an optimal threading \textsf{OPT} is at least $d(v)-1$ (i.e., the minimum cost of a tree at $v$). Hence, 
\[
\frac{\alpha(T)}{\alpha(\textsf{OPT})} \leq \frac{r \cdot \sum_{v \in V} d(v)}{\sum_{v\in V} d(v)-1} = \frac{2rm}{2m-n} \leq \frac{2rm}{m} = 2r
\]  
as desired.
\end{proof}

\section{Future Work}

Interesting avenues for future work include further characterizing \threading in geometric graphs, such as by
\begin{itemize}
    \item Completing our analysis of rectangular grid graphs and expanding the study to general grid graphs with irregular boundaries and holes.
    \item Determining whether our results for \perfect extend to planar graphs, especially those whose turn costs respect the physical constraints inherent to geometries.\footnote{Planar \oneinthreesat is NP-hard~\cite{mulzer2008minimum}.}
\end{itemize}
Other open questions include the hardness of \threading in graphs of maximum degree 3 and the hardness of \double when turn costs are restricted to $0$ or $1$.

\bibliography{bib}

\begin{thebibliography}{10}

\bibitem{adamaszek2018new}
Anna Adamaszek, Matthias Mnich, and Katarzyna Paluch.
\newblock New approximation algorithms for $(1, 2)$-{TSP}.
\newblock In Ioannis Chatzigiannakis, Christos Kaklamanis, D\'{a}niel Marx, and Donald Sannella, editors, {\em 45th International Colloquium on Automata, Languages, and Programming (ICALP 2018)}, volume 107 of {\em Leibniz International Proceedings in Informatics (LIPIcs)}, pages 9:1--9:14, 2018.
\newblock \href {https://doi.org/10.4230/LIPIcs.ICALP.2018.9} {\path{doi:10.4230/LIPIcs.ICALP.2018.9}}.

\bibitem{aggarwal2000angular}
Alok Aggarwal, Don Coppersmith, Sanjeev Khanna, Rajeev Motwani, and Baruch Schieber.
\newblock The angular-metric traveling salesman problem.
\newblock {\em SIAM Journal on Computing}, 29(3):697--711, 2000.

\bibitem{arkin2005optimal}
Esther~M. Arkin, Michael~A. Bender, Erik~D. Demaine, S{\'a}ndor~P. Fekete, Joseph S.~B. Mitchell, and Saurabh Sethia.
\newblock Optimal covering tours with turn costs.
\newblock {\em SIAM Journal on Computing}, 35(3):531--566, 2005.

\bibitem{bellman1962dynamic}
Richard Bellman.
\newblock Dynamic programming treatment of the travelling salesman problem.
\newblock {\em Journal of the ACM}, 9(1):61--63, 1962.

\bibitem{bern2023fabrication}
James~M. Bern, Zach~J. Patterson, Leonardo~Zamora Ya{\~n}ez, Kristoff~K. Misquitta, and Daniela Rus.
\newblock A fabrication and simulation recipe for untethering soft-rigid robots with cable-driven stiffness modulation.
\newblock In {\em 2023 IEEE/RSJ International Conference on Intelligent Robots and Systems (IROS)}, pages 8337--8342. IEEE, 2023.

\bibitem{bern2022contact}
James~M. Bern, Leonardo~Zamora Ya{\~n}ez, Emily Sologuren, and Daniela Rus.
\newblock Contact-rich soft-rigid robots inspired by push puppets.
\newblock In {\em 2022 IEEE 5th International Conference on Soft Robotics (RoboSoft)}, pages 607--613. IEEE, 2022.

\bibitem{christofides2022worst}
Nicos Christofides.
\newblock Worst-case analysis of a new heuristic for the travelling salesman problem.
\newblock {\em Operations Research Forum}, 3(1):20, 2022.

\bibitem{de1998upper}
Dominique de~Caen.
\newblock An upper bound on the sum of squares of degrees in a graph.
\newblock {\em Discrete Mathematics}, 185(1-3):245--248, 1998.

\bibitem{demaine2024graph}
Erik~D. Demaine, Yael Kirkpatrick, and Rebecca Lin.
\newblock Graph threading.
\newblock In Venkatesan Guruswami, editor, {\em 15th Innovations in Theoretical Computer Science Conference (ITCS 2024)}, volume 287 of {\em Leibniz International Proceedings in Informatics (LIPIcs)}, pages 38:1--38:18, 2024.
\newblock \href {https://doi.org/10.4230/LIPIcs.ITCS.2024.38} {\path{doi:10.4230/LIPIcs.ITCS.2024.38}}.

\bibitem{ellis2015dna}
Joanna~A. Ellis-Monaghan, Andrew McDowell, Iain Moffatt, and Greta Pangborn.
\newblock {DNA} origami and the complexity of {Eulerian} circuits with turning costs.
\newblock {\em Natural Computing}, 14:491--503, 2015.

\bibitem{galil1986ev}
Zvi Galil, Silvio Micali, and Harold Gabow.
\newblock An {$O(EV \log V)$} algorithm for finding a maximal weighted matching in general graphs.
\newblock {\em SIAM Journal on Computing}, 15(1):120--130, 1986.

\bibitem{garey1979computers}
Michael~R. Garey and David~S. Johnson.
\newblock {\em Computers and Intractability: A Guide to the Theory of {NP}-Completeness}.
\newblock W. H. Freeman, 1979.

\bibitem{held1962dynamic}
Michael Held and Richard~M. Karp.
\newblock A dynamic programming approach to sequencing problems.
\newblock {\em Journal of the Society for Industrial and Applied mathematics}, 10(1):196--210, 1962.

\bibitem{Kyosev2015braiding}
Y.~Kyosev.
\newblock Yarn winding operations in braiding.
\newblock In {\em Braiding Technology for Textiles}, Woodhead Publishing Series in Textiles, chapter~10, pages 231--254. Woodhead Publishing, 2015.

\bibitem{Lin2024push}
Rebecca Lin and Tomohiro Tachi.
\newblock Push puppet-inspired deployable structure, Jan 2024.
\newblock URL: \url{https://twitter.com/rebeccayelin/status/1749193197031469102}.

\bibitem{mulzer2008minimum}
Wolfgang Mulzer and G{\"u}nter Rote.
\newblock Minimum-weight triangulation is np-hard.
\newblock {\em Journal of the ACM (JACM)}, 55(2):1--29, 2008.

\bibitem{patterson2023safe}
Zach~J. Patterson, Wei Xiao, Emily Sologuren, and Daniela Rus.
\newblock Safe control for soft-rigid robots with self-contact using control barrier functions.
\newblock arXiv:2311.03189, 2023.
\newblock URL: \url{https://arXiv.org/abs/2311.03189}.

\end{thebibliography}

\end{document}